
\documentclass[copyright,creativecommons]{eptcs}

\usepackage{amssymb}
\usepackage{amsmath}
\usepackage{graphicx}
\usepackage{amsfonts}
\usepackage{epstopdf}

\usepackage{subfigure}

\DeclareGraphicsRule{.tif}{png}{.png}{`convert #1 `basename #1 .tif`.png}

\newcommand{\Reals}{\mathbb{R}}
   
\newcommand{\Time}{{\sf T}}  

\newcommand{\val}[1]{\mathbf{#1}}
\newcommand{\vals}[1]{\mathit{vals}(#1)}

\newcommand{\ms}[1]{\ifmmode%
\mathord{\mathcode`-="702D\it #1\mathcode`\-="2200}\else%
$\mathord{\mathcode`-="702D\it #1\mathcode`\-="2200}$\fi}

\newcommand{\Tt}{\mathcal{T}}
\newcommand{\M}{\mathcal{M}}
\newcommand{\AutA}{\mathcal{A}}
\newcommand{\AutB}{\mathcal{B}}

\newcommand{\fstate}[1]{#1.\ms{fstate}}
\newcommand{\lstate}[1]{#1.\ms{lstate}}
\newcommand{\arrow}[1]{\mathrel{\stackrel{#1}{\rightarrow}}}

\newcommand{\restr}{\mathop{\lceil}}
\newcommand{\restrrange}{\mathrel{\downarrow}}

\newcommand{\domain}[1]{{\it dom}(#1)}

\newcommand{\concat}{\mathbin{^{\frown}}}
\newcommand{\hexec}[1]{\ms{execs}(#1)}

\newcommand{\htrace}[1]{{\it trace}(#1)}
\newcommand{\htraces}[1]{{\it traces}(#1)}

\newcommand{\deq}{\mathrel{\stackrel{\scriptscriptstyle\Delta}{=}}}

\newcommand{\type}[1]{\ms{type{(#1)}}}
\newcommand{\dtype}[1]{\ms{dtype{(#1)}}}

\newcommand{\as}[1]{#1_a}

\newcommand{\levup}{\uparrow}
\newcommand{\ws}[1]{#1_{w}}

\newcommand{\tran}[1]{\stackrel{#1}{\longrightarrow}}

\newcommand{\padact}{\varepsilon}

\newcommand{\padex}{\gamma}

\newtheorem{theorem}{Theorem}
\newtheorem{lemma}{Lemma}

\newtheorem{definition}{Definition}
\newtheorem{proposition}{Proposition}

\newcommand{\PF}{\par\noindent{\bf Proof:}~}
\newcommand{\QeD}{\hfill{\rule{3mm}{3mm}}\medskip}
\newenvironment{proof}{\PF}{\QeD}

\title{World Automata: a compositional approach to model implicit communication in hierarchical Hybrid Systems. \thanks{This research has been supported by EC-Project C4C ({\em Control for Coordination of Distributed Systems}) and the coordination action EuroSurge (grant no. 288233) funded by the European Commission in the 7th EC framework program. }}
\author{Marta Capiluppi \quad Roberto Segala
\institute{Universit\`a di Verona \\
					Dipartimento di Informatica \\
					Verona, Italy}
\email{marta.capiluppi@univr.it, roberto.segala@univr.it}
}

\begin{document}

\maketitle

\begin{abstract}
We propose an extension of Hybrid I/O Automata (HIOAs) to model agent systems and their implicit communication through perturbation of the environment, like localization of objects or radio signals diffusion and detection. The new object, called World Automaton (WA), is built in such a way to preserve as much as possible of the compositional properties of HIOAs and its underlying theory. From the formal point of view we enrich classical HIOAs with a set of {\em world variables} whose values are functions both of time and space. World variables are treated similarly to local variables of HIOAs, except in parallel composition, where the perturbations produced by world variables are summed. In such way, we obtain a structure able to model both agents and environments, thus inducing a hierarchy in the model and leading to the introduction of a new operator. Indeed this operator, called inplacement, is needed to represent the possibility of an object (WA) of living inside another object/environment (WA). 
\end{abstract}

%
%

\section{Introduction}

Agents moving in an environment need to communicate to achieve coordination for a common objective. Their communication method can be {\em explicit}, when they broadcast or send a signal to the other agents. This forces the introduction of a mechanism of declaration of intentions, for which either each agent communicates its position at fixed time instants, or a supervisor exists, that is able to know the topology of the network of agents. They can also communicate in an {\em implicit} way, i.e. using their {\em senses} to catch perturbations of the environment due to other agents. Implicit communication may be used in case of presence of noise and environmental hostilities, that prevent direct communication, as well as in case of necessity of not being intercepted or of being subject to faults and failures affecting the sender or the receiver. Using implicit communication, the agents do not need to broadcast their position, because they can {\em feel} the presence of other agents, or objects in general, avoiding collision. Moreover, implicit communication is also used to catch radio (or similar) signals that involve perturbation of the environment via sound waves embedding messages. Implicit communication does not necessarily substitute direct communication, but can be used as a redundant and faster way of communication in case of immediate response to an environment perturbation.

To face the problem of implicit communication, in \cite{gandalf12} we specialized some variables of the Hybrid I/O Automata (HIOAs) of \cite{segala} and called them {\em world variables}. This modification has been motivated by the case studies of the European Project CON4COORD (EU FP7 223844): agents performing a {\em search} mission, such as UAVs \cite{polycarpou} or autonomous underwater vehicles \cite{sousa}, but also of road traffic problems \cite{road,roadcontrol} and autonomous straddle carriers in harbours \cite{adhs12}. Indeed what is common to each case study is the presence of a collection of agents that communicate and coordinate to achieve a common goal. Moreover the agents move within an environment that changes dynamically and detect each other's presence not necessarily via direct communication but rather by observations of the environmental changes. World variables are, hence, used to represent the changes of the environment as perceived by the agents, achieving implicit communication. The difference with the other variables of HIOAs is that world variables dynamics depend both on the time and the space, creating a sort of map of the surrounding world for the agents.

What was missing in \cite{gandalf12} is the interaction between agents and environment. Indeed changes on the environment are caused by agents and the environment has to {\em refer} to the other agents these changes. By adopting the usual compositional rule to represent also the environment with the same model of agents, i.e. automata, it is possible to extend the world variables paradigm. In this paper we will then consider also the environment as a HIOA with world variables. This choice introduces a hierarchy of automata, that can be both and contemporarily agents living in an environment and environments for other agents.

We will call the extended HIOAs {\em World Automata} (WAs), with the double aim of stressing both the capability of representation of this framework (the world itself) and of representing the reality in the most natural way possible, without adding artificial machineries. We renamed the automata because, even if they are an extension of the HIOAs, they will need slightly different operators to prove compositionality results. 
 
The focus of this paper will be on the hierarchical representation of WAs. Indeed, it implies the need to distinguish between the variables used by an automaton to communicate with the world in which it lives and the variables it uses to communicate with the world it creates. 
We could simply partition the sets of variables into variables used to communicate with the outside world and variables used to communicate with the inside world. Nevertheless, this method will hide the hierarchy of automata. On the contrary, we want to keep the hierarchy in order to be able to always retrieve automata at different levels of depth.
For this reason we equip variables with levels and we assume that variables at different levels are distinct. 

The introduction of levels slightly changes the parallel composition policy, since we want to compose automata without losing the original hierarchy of the components. We need to impose that variables at levels not supposed to interact are not synchronized: it suffices to use disjoint sets of world variables, renaming them when needed. Indeed parallel composition requires synchronization only at the same level.
To describe the interaction of two automata at different levels we introduce a new operator called Inplacement, used to compose a WA inside another WA. Inplacement establishes the policy of communication between the environment a WA provides and the WAs living in this environment.

WAs have been used to model a real application in \cite{adhs12}, where straddle carriers autonomously moving into a harbour have to follow some trajectories avoiding collisions. The interested reader can find in the above mentioned paper most of the modeling features described in this paper, including an example of radio communication.

At the best of our knowledge, none of the existing modeling frameworks can be used to formally represent both implicit communication and hierarchy, without flattening the representation. Some approaches have been used to write a language capable to represent dynamically changing systems. One has been introduced in \cite{shift} where dynamic networks of hybrid automata are studied. The introduced programming language focuses on dynamical interfaces. Another method has been presented in \cite{cif} where a compositional interchange format (CIF) defined in terms of an interchange automaton is used as a common language to describe objects from the different models for hybrid systems existing in literature. None of these two languages is based on the idea of implicit communication coded by world variables. Nevertheless there is an ongoing effort to extend the current version of CIF to also include the language generated by world automata, in order to have a tool for automatically implement systems where implicit communication and hierarchy is needed.

The paper is organized as follows: in Section \ref{sec:wa} we introduce the modeling framework; in Section \ref{sec:parallel} parallel composition of WAs is described; in Section \ref{sec:inplace} the inplacement operation is introduced. The theory is illustrated throughout the paper with a very simple example. \\
Notice that all the results presented in the paper use the notation and follow the results of \cite{segala}.

\section{World automata}\label{sec:wa}

In \cite{gandalf12} we extended the HIOA modeling framework of \cite{segala} by specializing some variables, called {\em world variables}.
The main difference between world and standard automaton variables is that the type of world variables is a function of time and space, not only of time as in standard automaton variables. Hence world variables values (and trajectories) will depend both on the instant of time and the position in the underlying space. An automaton $\AutA$ will use its world inputs $\ws{U}$ to receive stimuli from the world it lives in. Analogously it will use its world outputs $\ws{Y}$ to give stimuli to the world it lives in. Finally internal world variables $\ws{X}$ are used to represent the world characteristics of $\AutA$. 
Here we extend the concept of world variables with hierarchy that can be used to represent nested worlds. To preserve the hierarchy and  the identity of each automaton inside a world, we introduce a level function described as $l:S\to\mathbb{N}$ and extracting the level of a variable in any set $S$. Basically, if we consider only variables of level $0$, then we have an ordinary HIOA equipped with some input and output world variables that are used to interact with its external world. The world variables of level 1 describe the world provided by an automaton. The world variables of level 2 describe the world provided by automata of level $1$ and so on.
We will call the HIOA with world variables and levels: World Automaton (WA).


In the following we will assume an underlying topological space $\M$. Without loss of generality the reader can think at $\M$ as a metric space, e.g. $\Reals^3$.
\begin{definition}\label{automaton}
{\em World Automaton}\\
A World Automaton (WA) $\AutA$ is a tuple
\begin{equation*}
((\ws{U},\ws{X},\ws{Y}),(\as{U},\as{X},\as{Y}),(I,H,O),Q,\Theta,D,\mathcal{T},l)
\end{equation*}
where
\begin{itemize}
\item $(\ws{U},\ws{X},\ws{Y})$ are disjoint sets of \emph{world input, inner, and output variables}, respectively. Let $W$ denote the set $\ws{U}\cup\ws{X}\cup\ws{Y}$ of \emph{world variables}.
\item $(\as{U},\as{X},\as{Y})$ are disjoint sets of \emph{automaton input, inner, and output variables}, respectively. Let $U,X,Y$ denote the sets $\ws{U}\cup\as{U},\ws{X}\cup\as{X},\ws{Y}\cup\as{Y}$ of \emph{input, inner, and output variables}, respectively, and let $V$ denote the set $U\cup X\cup Y$ of \emph{variables}.
\item $(I,H,O)$ are disjoint sets of \emph{input, hidden, and output actions}, respectively. Let $A$ denote the set $I\cup H\cup O$ of \emph{actions}.
\item $Q\subseteq \vals{X}$ is the set of \emph{states}.
\item $\Theta\subseteq Q$ is a nonempty set of \emph{initial states}.
\item $D\subseteq \vals{X}\times A \times \vals{X}$ is the \emph{discrete transition relation}.
\item $\Tt$ is a set of trajectories on
$V$ that satisfy the following axioms
\begin{description}
\item[T1] {\em (Prefix closure)} \\
  For every $\tau\in\Tt$ and every $\tau' \leq \tau$, $\tau' \in \Tt$.
\item[T2] {\em (Suffix closure)} \\
  For every $\tau \in \Tt$ and every $t\in\domain{\tau}$, $\tau\unrhd t \in \Tt$.
\item[T3] {\em (Concatenation closure)} \\
  Let $\tau_0 , \tau_1 , \tau_2, \ldots$ be a sequence of trajectories
  in $\Tt$ such that, for each nonfinal index $i$, $\tau_i$ is closed
  and $\lstate{\tau_i} = \fstate{\tau_{i+1}}$.
  Then $\tau_0 \concat \tau_1 \concat \tau_2 \cdots \in \Tt$.
\end{description}
\item $l:S\to\mathbb{N}$ is the {\em level function} extracting the level of a variable or action in any set $S$.
\end{itemize}
\end{definition}



In this description, like in the HIOA theory, variables $(\as{U},\as{X},\as{Y})$ are used by the automaton to communicate with other automata living in the same world. To keep the theory consistent with previous descriptions of automata, all the $X$ variables represent persistent characteristics of the system. 
For the sake of simplicity we will not use the level function in the rest of the paper, but we will make the variable level explicit by writing $v[i]$ for the variable of level $i$ and $S[i]$ for the variables in $S$ whose level is $i$. 
Moreover we will consider only finite depth WAs in this paper. 
Note that $\ws{X}[0] = \emptyset$, due to the definition of $X$ variables above and to the fact that level 0 is meant to be the level of the outside world, i.e. the level in which the automaton lives. 

{\bf Notation}: For each variable $v$, we
assume both a {\em (static) type\/}, \type{v}, which gives the set of values it
may take on, and a {\em dynamic type}, \dtype{v}, which gives the
set of trajectories it may follow. 
A {\em valuation\/} $\val{v}$ for a set of variables $V$ is a function
that associates with each variable $v \in V$ a value in $\type{v}$.
Let $J$ be a left-closed interval of $\Time$ (the time axis) with left endpoint equal to
$0$. Then a {\em $J$-trajectory\/} for $V$ is a function $\tau:
J\rightarrow\vals{V}$,
such that for each $v \in V$, $\tau\restrrange v \in\dtype{v}$.
A {\em trajectory\/} for $V$ is a $J$-trajectory for $V$, for any
$J$. Trajectory $\tau$ is a {\em prefix\/} of trajectory $\tau'$, denoted
by $\tau \leq \tau'$, if $\tau$ can be obtained by restricting
$\tau'$ to a subset of its domain. We define $\tau \unrhd t  \deq  (\tau \restr [t,\infty)) - t$. The concatenation $ \concat$ of two trajectories is obtained
by taking the union of the first trajectory and the function obtained
by shifting the domain of the second trajectory until the start time
agrees with the limit time of the first trajectory; the last valuation
of the first trajectory, which may not be the same as the first valuation
of the second trajectory, is the one that appears in the concatenation.
Prefix, suffix and concatenation operations return trajectories. 
We write $f \restr P$ for the restriction of function $f$ to set $P$,
that is, the function $g$ with $\domain{g} = \domain{f} \cap P$ such that
$g(c) = f(c)$ for each $c \in \domain{g}$. 
If $f$ is a function whose range is a set of functions and $P$ is a set,
then we write $f \restrrange P$ for the function $g$ with $\domain{g}
= \domain{f}$ such that $g(c) = f(c) \restr P$ for each $c \in \domain{g}$. 
For more detail the interested reader can refer to \cite{segala}.

 For a set of objects $S$ we will write $S[i,i+1]$ to indicate the objects of $S$ at level $i$ and $i+1$. We will also write $S[i,.]$ for objects of $S$ at levels $i$ and greater (i.e. deeper).\\
With some minor extensions \cite{tec_rep}, the results on semantics of HIOAs are still valid for WAs, because they have been designed to follow as much as possible the HIOA theory. Hence all the results on executions, traces and simulation presented in \cite{segala} are extended to WAs.

\section{Parallel Composition} \label{sec:parallel}

In our framework, Parallel Composition models the interaction and communication of two or more agents living in the same world, i.e. of two WAs at the same level, with the environment, i.e. the world outside. We extend by comparison the notion of parallel composition introduced in \cite{gandalf12} for HIOAs with world variables, with the treatment of levels.
First, we introduce compatibility conditions to prevent undesired interactions between different levels for the WAs that have to be composed. 

\begin{definition}\label{def:parcomp}
Two WAs $\AutA_1$ and $\AutA_2$ are compatible if 
\begin{enumerate}
\item $V_1[1,.]\cap V_2[1,.] = \emptyset$, $A_1[1,.]\cap A_2[1,.] = \emptyset$,
\item $(U_{w1}\cup U_{w2}) \cap (Y_{w1}\cup Y_{w2}) = \emptyset$.
\item $H_1\cap A_2 = H_2\cap A_1 = \emptyset$,
\item $X_1\cap V_2 = X_2\cap V_1 = \emptyset$,
\item $O_1\cap O_2 = \emptyset$,
\item $Y_1\cap Y_2 = \emptyset$.
\end{enumerate}
\end{definition}
The reader may notice that conditions 2 to 6 are the compatibility conditions for HIOAs with world variables of \cite{gandalf12}. 
The only difference is given by the first condition that states that all inner levels (higher than 0) are disjoint, and therefore no communication can occur at such levels. This means that generated worlds are disjoint. 
Since by the first condition communication may occur only at level 0, all the other properties are interesting only for level 0, even if not specified.
These conditions, when not satisfied by the WAs, can be obtained by changing variables names.
\begin{definition}\label{def:parallel}
{\em Parallel composition}\\
If $\AutA_1,\AutA_2$ are two compatible WAs, then their composition $\AutA_1\|\AutA_2$ is defined as the structure\\ $\AutA=(\ws{U},\ws{X},\ws{Y},\as{U},\as{X},\as{Y},I,H,O,Q,\Theta,D,\Tt,l)$ with:
\begin{enumerate}
\item $\ws{U} = U_{w1}\cup U_{w2}$, $\ws{X} = X_{w1}\cup X_{w2}$, $\ws{Y} = Y_{w1}\cup Y_{w2}$
\item $\as{Y}=Y_{a1}\cup Y_{a2}$, $\as{X}=X_{a1}\cup X_{a2}$, $\as{U}=(U_{a1}\cup U_{a2})\setminus\as{Y}$
\item $O=O_1\cup O_2$, $I=(I_1\cup I_2)\setminus O$ and $H=H_1\cup H_2$
\item $Q=\{\val{x}\in \vals{X}\mid\val{x}\restr X_1\in Q_1\land \val{x}\restr X_2\in Q_2\}$
\item $\Theta=\{\val{x}\in Q \mid \val{x}\restr X_1\in \Theta_1\land \val{x}\restr X_2\in \Theta_2\}$
\item $D = \{(\val{x},a,\val{x}') \mid $ for each $i\in\{1,2\}$\\
\hspace*{7em} either $a\in A_i$ and $\val{x}\restr X_i \tran{a} \val{x}^{\prime}\restr X_i $,\\
\hspace*{7em} or $a\notin A_i$ and $\val{x}\restr X_i = \val{x}^{\prime}\restr X_i\}$.
\item $\Tt=\{\tau\mid$ there exists $\tau_1\in\Tt_1,\tau_2\in\Tt_2$ such that\\ 
$\tau\restrrange (V_i\setminus (Y_{w1}\cap Y_{w2}))=\tau_i\restrrange (V_i\setminus  (Y_{w1}\cap Y_{w2})), i\in\{1,2\}$\\
$\tau\restrrange  (Y_{w1}\cap Y_{w2})=\tau_1\restrrange  (Y_{w1}\cap Y_{w2})+\tau_2\restrrange (Y_{w1}\cap Y_{w2})\}$
\item $l(v)=l_1(v)$ if $v\in V_1$, $l(v)=l_2(v)$ if $v\in V_2$.
\end{enumerate}
\end{definition}
This definition of parallel composition differs from the one of HIOAs with world variables of \cite{gandalf12} only by the last condition:
it preserves levels of the variables in composed automata $\AutA_1,\AutA_2$.
The following result on composability is proved:
\begin{proposition}\label{th:autparallel}
The composition of two WAs is a WA.
\end{proposition}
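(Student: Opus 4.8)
The plan is to verify that the structure $\AutA = \AutA_1 \| \AutA_2$ produced by Definition~\ref{def:parallel} satisfies every clause of Definition~\ref{automaton}. Most clauses are purely syntactic bookkeeping; the only real content is the trajectory axioms \textbf{T1}--\textbf{T3}, which is where the bulk of the work lies. Since the paper states that WAs inherit the HIOA theory of \cite{segala} (modulo the minor extensions of \cite{tec_rep}), and since the composition differs from the HIOA-with-world-variables composition of \cite{gandalf12} only in the added level-function clause, the strategy is to reduce each obligation to the corresponding already-established fact, checking that the presence of levels and the summation of shared world outputs $Y_{w1}\cap Y_{w2}$ do not break anything.

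\textbf{First}, I would dispatch the disjointness and well-formedness conditions. One checks, using the compatibility conditions of Definition~\ref{def:parcomp}, that $(\ws{U},\ws{X},\ws{Y})$ are pairwise disjoint and likewise $(\as{U},\as{X},\as{Y})$: e.g. $\as{U}\cap\as{Y}=\emptyset$ holds by construction since we subtract $\as{Y}$, and $\as{X}\cap\as{Y}=\emptyset$ follows from conditions 4 and 6. Similarly $(I,H,O)$ are disjoint using conditions 3 and 5 and the subtraction of $O$ from $I$. One then notes that $Q\subseteq\vals{X}$ and $\emptyset\neq\Theta\subseteq Q$ follow directly from clauses 4--5 of Definition~\ref{def:parallel} together with the nonemptiness of $\Theta_1,\Theta_2$ (pick $\val{x}$ with $\val{x}\restr X_i\in\Theta_i$, using $X_1\cap X_2=\emptyset$ to glue them), and that $D\subseteq\vals{X}\times A\times\vals{X}$ is immediate. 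The level function $l$ is well-defined because, by compatibility condition 1, any variable shared between $\AutA_1$ and $\AutA_2$ lives at level $0$, and for level-$0$ variables the HIOA-with-world-variables framework already forces agreement of the two level assignments, so the case split in clause 8 is consistent.

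\textbf{The main obstacle} is verifying \textbf{T1}--\textbf{T3} for the trajectory set $\Tt$ in clause 7. Here the subtlety is the additive treatment of the shared world outputs: a trajectory $\tau\in\Tt$ is obtained from a pair $\tau_1\in\Tt_1,\tau_2\in\Tt_2$ that agree on the non-shared-world-output part of each $V_i$ and whose $Y_{w1}\cap Y_{w2}$ components are summed. For prefix closure, given $\tau'\leq\tau$ one restricts $\tau_1,\tau_2$ to $\domain{\tau'}$; these restrictions lie in $\Tt_1,\Tt_2$ by \textbf{T1} for the components, and because restriction commutes with the pointwise sum on $Y_{w1}\cap Y_{w2}$ and with $\restrrange$, the restricted pair witnesses $\tau'\in\Tt$. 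Suffix closure \textbf{T2} is analogous with $\unrhd t$. For concatenation closure \textbf{T3}, given a compatible sequence $\tau^0,\tau^1,\ldots$ in $\Tt$ one must extract witnessing sequences $\tau_1^0,\tau_1^1,\ldots$ and $\tau_2^0,\tau_2^1,\ldots$; the point that needs care is that the matching condition $\lstate{\tau^k}=\fstate{\tau^{k+1}}$ on states (valuations of $X$, hence not involving the summed outputs since $Y_{w1}\cap Y_{w2}\subseteq Y\subseteq$ non-state variables) transfers to each component via the projections $\restr X_i$, so that \textbf{T3} applied in $\Tt_1$ and $\Tt_2$ yields $\tau_1,\tau_2$ whose sum-and-glue is exactly $\tau^0\concat\tau^1\concat\cdots$, using that concatenation distributes over the pointwise sum and over $\restrrange$. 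The essential lemma underlying all three cases is that the operations $\restr$, $\restrrange$, prefix, suffix, and concatenation all commute with pointwise addition of trajectory valuations on a fixed variable subset; once this is recorded, each axiom reduces to the corresponding axiom for the components, which holds by hypothesis. Finally I would remark that the discrete transition clause 6 and the state/initial-state clauses make $\AutA$ a WA in exactly the same way they make the analogous HIOA construction an HIOA, so no new argument is needed there.
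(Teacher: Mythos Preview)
Your proposal is correct and follows the same template the paper uses. The paper itself does not spell out a proof of this proposition at all: it simply observes that the only change relative to the composition of HIOAs with world variables in \cite{gandalf12} is the level function, and that compatibility condition~1 ensures levels cause no new interaction, so the proof from \cite{gandalf12} carries over verbatim. Your sketch is precisely what that deferred proof looks like, and indeed it mirrors the detailed argument the paper \emph{does} give for the analogous Proposition~\ref{th:autinside} (inplacement yields a WA), using Lemmas~\ref{lm:restrprojtraj}--\ref{lm:restrconcattraj} to push $\restrrange$ through prefixes, suffixes, and concatenations, together with the observation that these operations commute with pointwise addition on the shared world-output component.

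One small remark: your justification that $l$ is well-defined can be stated more directly. By compatibility condition~1 any variable common to $V_1$ and $V_2$ must have level~$0$ in both, so $l_1$ and $l_2$ automatically agree on the overlap; no appeal to the HIOA framework is needed. Also note that under compatibility condition~6 one has $Y_1\cap Y_2=\emptyset$, hence $Y_{w1}\cap Y_{w2}=\emptyset$ as well, so the additive clause in item~7 of Definition~\ref{def:parallel} is in fact vacuous here and the trajectory axioms reduce exactly to the ordinary HIOA case; your more general argument handling the sum is not wrong, just unnecessary for this particular operator (it is genuinely needed for inplacement, where the analogous intersection $U_{w1}\cap Y_{w3}$ need not be empty).
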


Proof of Theorem \ref{th:autparallel} and all the results on parallel composition reported with their proofs in \cite{gandalf12} are still valid for WAs, because the introduction of levels does not affect parallel composition, due to the compatibility conditions.

\section{Inplacement operator} \label{sec:inplace}

The main difference with the HIOA theory and the framework presented in \cite{gandalf12} is that WAs create a hierarchy of automata. To this end a second operator is introduced that represents the interaction of a WA $\AutA_2$ inside another WA $\AutA_1$ with the world created by $\AutA_1$. The result is equivalent to compose $\AutA_2$ with the automata of level 1 of $\AutA_1$, although there are some important differences. This operation shows how the hierarchical communication works between the automata inside a world and the automaton representing their external world. Note that with this operator we do not want to describe the action of a WA {\em moving} inside another WA, but we want to describe the static behavior of an automaton {\em inside} another.\\
\noindent {\bf Notation}: For a set $S$ of objects we write $S\levup$ for the set obtained from $S$ by increasing by $1$ the level of each object. 
\begin{definition}\label{def:incomp}
An automaton $\AutA_2$ is inplace compatible with $\AutA_1$ if, letting $\AutA_3$ be $\AutA_2\levup$,
\begin{enumerate}
\item $V_1[2,.]\cap V_3[2,.] = \emptyset$, $A_1[2,.]\cap A_3[2,.] = \emptyset$,
\item $Y_{w1}\cap Y_{w3} = \emptyset$,
\item $H_1\cap A_3 = H_3\cap A_1 = \emptyset$,
\item $X_1\cap V_3 = X_3\cap V_1 = \emptyset$.
\item $O_1\cap O_3 = \emptyset$,
\item $Y_1\cap Y_3 = \emptyset$.
\end{enumerate}
\end{definition}
As for parallel composition, when using the {\em inplacement} operator the two composing automata need to have the disjoint sets of variables at levels deeper than 1. No compatibility condition is stated for level 0 because only the outside automaton $\AutA_1$ has this level: indeed the level of $\AutA_2$ is increased to be composed with automata of level 1 of $\AutA_1$. Disjointness of variables is again achieved by renaming when necessary. 
Since the world variables of level 1 are used to let $\AutA_2$ communicate with the world provided by $\AutA_1$, there should be no conflicts on outputs (condition 2).
\begin{definition}\label{def:inplace}
The inplacement of WA $\AutA_2$ inside $\AutA_1$, denoted by $\AutA_1[\AutA_2]$, is a system\\ $\AutA=(\ws{U},\ws{X},\ws{Y},\as{U},\as{X},\as{Y},I,H,O,Q,\Theta,D,\Tt,l)$ where, letting $\AutA_3$ be $\AutA_2\levup$,
\begin{enumerate}
\item $\ws{U}[0,1] = U_{w1}[0,1]$, $\ws{X}[0,1] = X_{w1}[0,1]$, $\ws{Y}[0,1] = Y_{w1}[0,1]$
\item $\ws{U}[2,.] = (U_{w1}\cup U_{w3})[2,.]$, $\ws{X}[2,.] = (X_{w1}\cup X_{w3})[2,.]$, $\ws{Y}[2,.] = (Y_{w1}\cup Y_{w3})[2,.]$
\item $\as{Y}=Y_{a1}\cup Y_{a3}$, $\as{U}=(U_{a1}\cup U_{a3})\setminus\as{Y}$ and $\as{X}=X_{a1}\cup X_{a3}$
\item $O=O_1\cup O_3$, $I=(I_1\cup I_3)\setminus O$ and $H=H_1\cup H_3$
\item $Q=\{\val{x}\in \vals{X}\mid\val{x}\restr X_1\in Q_1\land \val{x}\restr X_3\in Q_3\}$
\item $\Theta=\{\val{x}\in Q \mid \val{x}\restr X_1\in \Theta_1\land \val{x}\restr X_3\in \Theta_3\}$
\item $D = \{(\val{x},a,\val{x}') \mid $ for each $i\in\{1,3\}$\\
either $a\in A_i$ and $\val{x}\restr X_i \tran{a} \val{x}^{\prime}\restr X_i $,\\
or $a\notin A_i$ and $\val{x}\restr X_i = \val{x}^{\prime}\restr X_i\}$.
\item $\Tt=\{\tau\mid$ there exists $\tau_1\in\Tt_1,\tau_3\in\Tt_3$ such that\\
$\tau\restrrange (V_i\setminus (U_{w1}\cap Y_{w3}))=\tau_i\restrrange (V_i\setminus (U_{w1}\cap Y_{w3}), i\in\{1,3\}$.\\
$\tau_1\restrrange  (U_{w1}\cap Y_{w3})=\tau\restrrange  (U_{w1}\cap Y_{w3})+\tau_3\restrrange (U_{w1}\cap Y_{w3})\}.$\\
For each $u\in U_{w3}\setminus V$, each $t\in\domain{\tau_3}$, $\tau_3(u)(t) = 0$.
\item $l(v)=l_1(v)$ if $v\in V_1$, $l(v)=l_3(v)$ if $v\in V_3$.
\end{enumerate}
\end{definition}
Note that the set of world variables of levels 0 and 1 are taken only from $\AutA_1$. For level 0  this definition derives from the fact that $\AutA_3$ has no objects at level 0; for level $1$ it derives from the fact that $\AutA_3$ has no internal world variables at level 1 and that the world variables of level $1$ of $\AutA_3$ that are not captured by $\AutA_1$ are no longer necessary since the environment does not use them.  This is also expressed by the last condition in the definition of the set of trajectories, which makes sure that the world input of $\AutA_3$ that is not captured by $\AutA_1$ is always $0$. This last condition is used to avoid the risk that, composing $\AutA_1$ with other WAs, the input of its inside automata not captured by $\AutA_1$ are then captured by the other composing WAs. Indeed, local worlds of automata have to be kept separated in parallel composition.
Note also that, since from level 2 on, $V_1\cap V_3=\emptyset$, and $\AutA_3$ has no level 0, we have that $U_{w1}\cap Y_{w3}$ might be $\neq\emptyset$ only at level 1.\\
The main difference with parallel composition is given by the condition on trajectories. If we consider the domain of trajectories as a group, we have that this last condition can be expressed as $\tau\restrrange  (U_{w1}\cap Y_{w3})=\tau_1\restrrange  (U_{w1}\cap Y_{w3})-\tau_3\restrrange (U_{w1}\cap Y_{w3})$. This statement derives from the consideration that $\AutA_1$ is supposed to have some input world variables whose values take into account the possibility to have other automata inside. The object resulting from the inplacement composition has, for the same variables, values resulting from the difference between the input signal of the outside automaton and the output signals of local automata. With the reverse reasoning $\AutA_1$ will have some input world variables whose values are the result of the sum of values of the input world variables of the object $\AutA_1[\AutA_2]$ and the values of the output world variables of $\AutA_2$ communicating with them. In an abstract way we can think of $\AutA_1[\AutA_2]$ as an object already containing other WAs, so that the local input world variables are updated decreasing their values each time another WA is composed with the already present WAs of the local world. 

We report here some lemmas on trajectories that will be used in the following proofs.
\begin{lemma}\label{lm:restrprojtraj}
Let $\tau$ be a trajectory in $V$. Let $I\subseteq\domain{\tau}$ and $V^\prime\subseteq V$. Then $(\tau\restr I)\restrrange V'=(\tau\restrrange V')\restr I$.
\end{lemma}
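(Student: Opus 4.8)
The plan is to prove Lemma~\ref{lm:restrprojtraj} by a straightforward unfolding of the two operations $\restr$ (restriction of the domain) and $\restrrange$ (pointwise restriction of the range) as defined in the Notation paragraph, checking that both sides denote the same function: same domain, and same value at every point of that domain.

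First I would compute the domain of each side. Since $\tau$ is a $J$-trajectory for $V$ with $\domain{\tau}=J$, and $I\subseteq\domain{\tau}$, the restriction $\tau\restr I$ has domain $\domain{\tau}\cap I = I$. Applying $\restrrange V'$ does not change the domain, so $\domain{(\tau\restr I)\restrrange V'}=I$. On the other side, $\tau\restrrange V'$ has domain $\domain{\tau}=J$, and then $\restr I$ cuts it down to $\domain{\tau}\cap I = I$. Hence both sides have domain $I$; in particular the hypothesis $I\subseteq\domain{\tau}$ is exactly what makes $\domain{\tau}\cap I = I$, so the two domains coincide with no further assumptions.

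Next I would fix an arbitrary $t\in I$ and evaluate. For the left-hand side, by definition of $\restrrange V'$ we have $\bigl((\tau\restr I)\restrrange V'\bigr)(t) = \bigl((\tau\restr I)(t)\bigr)\restr V'$, and since $t\in I=\domain{\tau\restr I}$, $(\tau\restr I)(t)=\tau(t)$, so the value is $\tau(t)\restr V'$. For the right-hand side, by definition of $\restr I$ we have $\bigl((\tau\restrrange V')\restr I\bigr)(t) = (\tau\restrrange V')(t)$, which by definition of $\restrrange V'$ equals $\tau(t)\restr V'$. The two values agree, so the functions are equal. Finally I would remark that the resulting object is again a trajectory for $V'$ (by the closure of dynamic types under restriction, as in \cite{segala}), so the equation makes sense as an identity of trajectories.

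There is essentially no obstacle here: the statement is a bookkeeping identity about commuting two restriction operators, and the only thing one must be careful about is that the domain of the inner restriction on the left ($I$) is compatible with the hypothesis $I\subseteq\domain{\tau}$ so that no points are lost — which is precisely why that hypothesis is stated. The proof is therefore a two-line verification of equality of functions (domains, then pointwise values), and I would present it as such.
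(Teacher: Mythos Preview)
Your proof is correct and is exactly the natural unfolding-of-definitions argument one expects for this bookkeeping identity. The paper itself does not prove the lemma inline but defers it to the technical report \cite{tec_rep}; your argument is presumably what appears there, since there is essentially no alternative route.
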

\begin{lemma}\label{lm:restrsuffixtraj}
Let $\tau$ be a trajectory in $V$. Let $V^\prime\subseteq V$. Then $(\tau \unrhd t)\restrrange V'=(\tau\restrrange V') \unrhd t$.
\end{lemma}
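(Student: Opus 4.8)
The plan is to prove Lemma \ref{lm:restrsuffixtraj} by a direct pointwise argument, unfolding all the definitions involved on both sides of the claimed equality, and checking that the two functions have the same domain and agree at every point. This is a routine bookkeeping computation about restrictions, domain shifts, and projections of trajectories, so the main work is just being careful about which operation is applied first.

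First I would recall the relevant definitions. For a trajectory $\tau$ in $V$ and a time $t\in\domain{\tau}$, the suffix is $\tau\unrhd t \deq (\tau\restr[t,\infty)) - t$, i.e. we cut $\tau$ down to the interval $[t,\infty)$ and then shift the domain left by $t$; thus $(\tau\unrhd t)$ has domain $\{s-t \mid s\in\domain{\tau}, s\geq t\}$ and $(\tau\unrhd t)(s') = \tau(s'+t)$. For a subset $V'\subseteq V$, the projection $\tau\restrrange V'$ is the function with the same domain as $\tau$ given by $c\mapsto \tau(c)\restr V'$. So I would compute the two sides separately: for the left-hand side $(\tau\unrhd t)\restrrange V'$, its domain equals $\domain{\tau\unrhd t}$ and its value at $s'$ is $(\tau\unrhd t)(s')\restr V' = \tau(s'+t)\restr V'$. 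For the right-hand side $(\tau\restrrange V')\unrhd t$, I first note $t\in\domain{\tau\restrrange V'}$ since that domain equals $\domain{\tau}$, so the suffix is defined; its domain is $\{s-t\mid s\in\domain{\tau}, s\geq t\}$, the same set, and its value at $s'$ is $(\tau\restrrange V')(s'+t) = \tau(s'+t)\restr V'$. Hence both sides have the same domain and agree pointwise, proving the equality.

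A small point worth addressing explicitly: one should check that the right-hand side is even well-formed, namely that $\tau\restrrange V'$ is still a trajectory (so that the suffix operation, which is stated to return trajectories, applies) and that $t$ lies in its domain. Both follow immediately because projection preserves the domain and, for each $v\in V'$, $(\tau\restrrange V')\restrrange v = \tau\restrrange v\in\dtype{v}$, so $\tau\restrrange V'$ is a $J$-trajectory for $V'$ on the same interval $J=\domain{\tau}$. One could alternatively shorten the argument by invoking Lemma \ref{lm:restrprojtraj} with $I = [t,\infty)\cap\domain{\tau}$ to commute the restriction past the projection, and then observing that the domain-shift by $-t$ commutes with projection trivially since it does not touch the range; but the direct pointwise computation is already short enough.

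I do not expect any genuine obstacle here: the statement is a structural identity and the only thing that can go wrong is a notational slip about the order of "restrict to $[t,\infty)$", "shift by $-t$", and "project to $V'$". The shift and the projection operate on orthogonal data — the shift reindexes the domain while the projection restricts the codomain functions — so they commute, and the restriction to a time interval commutes with projection by Lemma \ref{lm:restrprojtraj}. The mild care needed is to make sure the hypothesis $t\in\domain{\tau}$ is used to guarantee $t\in\domain{\tau\restrrange V'}$ so that both suffixes in the statement are defined on the same interval.
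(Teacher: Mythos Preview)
Your proposal is correct: the pointwise unfolding of the definitions of $\unrhd$ and $\restrrange$ shows that both sides have the same domain and agree at every point, exactly as you describe. The paper does not actually give a proof of this lemma in the text---it defers it to the accompanying technical report---so there is nothing to compare against here, but your direct argument is the natural and standard one for this kind of structural identity.
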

\begin{lemma}\label{lm:restrconcattraj}
Let $\tau$ be a trajectory in $V$ such that $\tau=\tau_0\concat\tau_1\concat\tau_2\concat\ldots$. Let $V^\prime\subseteq V$. Then $(\tau_0\concat\tau_1\concat\tau_2\concat\ldots)\restrrange V'=(\tau_0\restrrange V')\concat(\tau_1\restrrange V')\concat(\tau_2\restrrange V')\concat\ldots$.
\end{lemma}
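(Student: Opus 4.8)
\quad The plan is to establish the identity first for the concatenation of two trajectories, then for finite concatenations by induction on the number of summands (using associativity of $\concat$), and finally for the countable case by passing to the limit over finite prefixes. The recurring ingredient is that $\restrrange V'$ leaves the domain of a trajectory unchanged: from $\domain{\sigma\restrrange V'}=\domain{\sigma}$ it follows that $\restrrange V'$ also preserves $\ltime{\cdot}$ and the property of being closed. In particular, whenever $\tau_0,\tau_1,\dots$ is a sequence of the kind admitted in axiom~T3 (each nonfinal $\tau_i$ closed, $\lstate{\tau_i}=\fstate{\tau_{i+1}}$), so is $\tau_0\restrrange V',\tau_1\restrrange V',\dots$, since $\lstate{\tau_i\restrrange V'}=\lstate{\tau_i}\restr V'=\fstate{\tau_{i+1}}\restr V'=\fstate{\tau_{i+1}\restrrange V'}$; hence the right-hand side of the claimed equation is a legitimate trajectory.

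For the base case, take $\sigma=\tau_0\concat\tau_1$ with $\tau_0$ closed. By the definition of $\concat$ together with the domain-preservation just noted, $\domain{\sigma\restrrange V'}=\domain{\sigma}=\domain{(\tau_0\restrrange V')\concat(\tau_1\restrrange V')}$, the last equality holding because $\ltime{\tau_0\restrrange V'}=\ltime{\tau_0}$ and $\domain{\tau_i\restrrange V'}=\domain{\tau_i}$. For pointwise agreement, fix $t$ in this common domain and split on whether $t\leq\ltime{\tau_0}$ or $t>\ltime{\tau_0}$. In the first case $\sigma(t)=\tau_0(t)$, so $(\sigma\restrrange V')(t)=\tau_0(t)\restr V'=(\tau_0\restrrange V')(t)$, which is the value at $t$ of $(\tau_0\restrrange V')\concat(\tau_1\restrrange V')$. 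In the second case $\sigma(t)=\tau_1(t-\ltime{\tau_0})$, and since the shift amount $\ltime{\tau_0}$ is unaffected by $\restrrange V'$, $(\sigma\restrrange V')(t)=\tau_1(t-\ltime{\tau_0})\restr V'=(\tau_1\restrrange V')(t-\ltime{\tau_0})$, again the value at $t$ of $(\tau_0\restrrange V')\concat(\tau_1\restrrange V')$. Associativity of $\concat$ then lets an induction on $n$ propagate the identity to $(\tau_0\concat\cdots\concat\tau_n)\restrrange V'=(\tau_0\restrrange V')\concat\cdots\concat(\tau_n\restrrange V')$ for all finite $n$; this already covers the case of a finite sequence.

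For a countable sequence, recall that the infinite concatenation is the unique trajectory whose restriction to $I_n=\domain{\tau_0\concat\cdots\concat\tau_n}$ equals $\tau_0\concat\cdots\concat\tau_n$ for every $n$, and likewise for the right-hand side with each $\tau_i$ replaced by $\tau_i\restrrange V'$. Given $t\in\domain{\tau}$, choose $n$ with $t\in I_n$; then, using Lemma~\ref{lm:restrprojtraj}, this characterization, and the finite case, $(\tau\restrrange V')(t)=\bigl((\tau\restr I_n)\restrrange V'\bigr)(t)=\bigl((\tau_0\concat\cdots\concat\tau_n)\restrrange V'\bigr)(t)=\bigl((\tau_0\restrrange V')\concat\cdots\concat(\tau_n\restrrange V')\bigr)(t)$, which is the value at $t$ of $(\tau_0\restrrange V')\concat(\tau_1\restrrange V')\concat\cdots$. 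Since moreover $\domain{(\tau_0\restrrange V')\concat(\tau_1\restrrange V')\concat\cdots}=\bigcup_n I_n=\domain{\tau}=\domain{\tau\restrrange V'}$ (again by domain-preservation of $\restrrange V'$), the two trajectories coincide. The only genuinely delicate points are the bookkeeping at the junction times---handled by the convention that the earlier trajectory supplies the shared endpoint value, a convention respected by applying $\restrrange V'$ pointwise---and checking that the infinite concatenation on the right is assembled from a legitimate T3-sequence and agrees with its finite prefixes; both reduce to the domain-preservation of $\restrrange V'$.
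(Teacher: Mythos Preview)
Your argument is correct. The paper does not actually prove this lemma in the text; it states only that ``the proofs of these results are reported in Section~2 of~\cite{tec_rep}'', so there is no in-paper proof to compare against. Your decomposition---binary case by pointwise inspection, finite case by induction using associativity, countable case via agreement on finite prefixes together with Lemma~\ref{lm:restrprojtraj}---is the natural route and handles the junction convention and the T3 hypotheses on the restricted sequence cleanly.
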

The proofs of these results are reported in Section 2 of \cite{tec_rep}.

Without loss of generality in the following we will assume that the domain of trajectories is a group. The results we are now going to introduce can be proved also using a monoid. We use the group to be coherent with the results of parallel composition. 
\begin{proposition}\label{th:autinside}
The inplacement of WA $\AutA_2$ inside WA $\AutA_1$ is a WA.
\end{proposition}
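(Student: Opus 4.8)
The plan is to verify that the tuple $\AutA = \AutA_1[\AutA_2]$ produced by Definition~\ref{def:inplace} satisfies every clause of Definition~\ref{automaton}. Since inplacement is, modulo the level bookkeeping and the extra condition forcing uncaptured world inputs of $\AutA_3$ to zero, structurally analogous to parallel composition, I would follow the same skeleton as the proof of Proposition~\ref{th:autparallel}, reusing the compatibility hypotheses of Definition~\ref{def:incomp} at each point where disjointness is needed. First I would check the easy syntactic conditions: that $(\ws{U},\ws{X},\ws{Y})$, $(\as{U},\as{X},\as{Y})$ and $(I,H,O)$ are triples of disjoint sets. Disjointness of the world triples at levels $0,1$ is inherited directly from $\AutA_1$ being a WA; at levels $\geq 2$ it follows from the componentwise unions together with conditions~1, 2 of inplace compatibility; the action-set disjointness $I\cap H$, $H\cap O$, $I\cap O$ follows from conditions~3--5 exactly as in the HIOA case. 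I would also note, using $\ws{X}[0]=\emptyset$ for both $\AutA_1$ and $\AutA_3$ and condition~4, that the state space clause $Q\subseteq\vals{X}$ and the nonemptiness of $\Theta$ (from $\Theta_1,\Theta_3$ nonempty) go through. The level function $l$ is well defined by condition~4/clause~9 since $V_1$ and $V_3$ overlap only where $l_1$ and $l_3$ agree by construction (both record the same level for a shared world variable of level~1).

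The substantive part is the trajectory axioms \textbf{T1}--\textbf{T3} for the set $\Tt$ of clause~8. Here I would argue exactly as one does for parallel composition, but carrying the two bookkeeping features through. For \textbf{T1} (prefix closure): given $\tau\in\Tt$ witnessed by $\tau_1\in\Tt_1$, $\tau_3\in\Tt_3$, and $\tau'\leq\tau$, restrict $\tau_1,\tau_3$ to $\domain{\tau'}$; these restrictions lie in $\Tt_1,\Tt_3$ by \textbf{T1} for $\AutA_1,\AutA_3$, they still witness $\tau'$ because restriction commutes with projection (Lemma~\ref{lm:restrprojtraj}) and with the pointwise sum of trajectories, and the zero-condition on $u\in U_{w3}\setminus V$ is preserved under restriction. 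For \textbf{T2} (suffix closure) the same recipe works using Lemma~\ref{lm:restrsuffixtraj} in place of Lemma~\ref{lm:restrprojtraj}, observing that suffixing commutes with $+$ and preserves the constant-$0$ condition. For \textbf{T3} (concatenation closure): given a sequence $\tau^{(0)},\tau^{(1)},\ldots\in\Tt$ with matching endpoints, pick witnesses $\tau_1^{(k)}\in\Tt_1$, $\tau_3^{(k)}\in\Tt_3$ for each $k$; the endpoint-matching of the $\tau^{(k)}$ transfers to $\tau_1^{(k)}$ and $\tau_3^{(k)}$ because the last/first valuations are determined by the projections onto $V_1$ resp.\ $V_3$; then $\tau_1\deq\tau_1^{(0)}\concat\tau_1^{(1)}\concat\cdots\in\Tt_1$ and likewise $\tau_3\in\Tt_3$ by \textbf{T3} for the components, and by Lemma~\ref{lm:restrconcattraj} these witness $\tau^{(0)}\concat\tau^{(1)}\concat\cdots$, since concatenation commutes with projection and with pointwise sum, and the zero-condition is inherited piecewise.

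I expect the main obstacle to be purely notational rather than conceptual: keeping straight which variables live at which level and checking that the ``$+$'' and ``$-$'' identities on $U_{w1}\cap Y_{w3}$ are consistent with the group assumption on $\domain{\tau}$, i.e.\ that the three commutation lemmas genuinely apply to the pointwise-summed component $\tau\restrrange(U_{w1}\cap Y_{w3})$. In particular one must be slightly careful that the extra clause ``for each $u\in U_{w3}\setminus V$, each $t$, $\tau_3(u)(t)=0$'' does not conflict with prefix/suffix/concatenation --- it does not, since it is a pointwise constraint closed under all three operations --- and that $U_{w1}\cap Y_{w3}$ can only be nonempty at level~1 (as already observed after Definition~\ref{def:inplace}), so no interaction is inadvertently created at levels $\geq 2$ where $V_1\cap V_3=\emptyset$. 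Once these checks are in place, $\AutA$ satisfies all clauses of Definition~\ref{automaton}, hence is a WA. \QeD
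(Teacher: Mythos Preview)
Your proposal is correct and follows essentially the same route as the paper's proof: first dispatch the disjointness and nonemptiness conditions via inplace compatibility, then verify \textbf{T1}--\textbf{T3} by restricting/shifting/concatenating the witnessing trajectories $\tau_1,\tau_3$ and invoking Lemmas~\ref{lm:restrprojtraj}--\ref{lm:restrconcattraj} together with the group rewriting $\tau\restrrange C_{13}=\tau_1\restrrange C_{13}-\tau_3\restrrange C_{13}$. Your additional remarks about the zero-condition on $U_{w3}\setminus V$ and the well-definedness of $l$ are correct extra checks that the paper's proof leaves implicit.
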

\begin{proof}
We show that $\AutA_1[\AutA_2]$ satisfies the properties of a WA. Again we let $\AutA_3$ be $\AutA_2\levup$.
Disjointness of the $U,X,Y$ sets follows from disjointness of the same sets in $\AutA_1$ and $\AutA_3$ and compatibility. Similarly for the actions. Nonemptiness of starting state follows from nonemptiness of starting states of $\AutA_1$ and $\AutA_3$ and disjointness of $X_1$ and $X_3$.
We verify the {\bf T} properties of trajectories. Let $C_{13}$ be $U_{w1}\cap Y_{w3}$.
 \begin{description}
  \item[T1]  We want to prove that for every $\tau\in\Tt$ and every $\tau' \leq \tau$, $\tau' \in \Tt$. Let $\tau$ be a trajectory in $\Tt$. Let $i\in\{1,3\}$. By the definition of inplacement there exists $\tau_1\in \Tt_1,\tau_3\in\Tt_3$ such that
  $\tau\restrrange (V_i\setminus C_{13})=\tau_i\restrrange (V_i\setminus C_{13}),$
$\tau\restrrange  C_{13}=\tau_1\restrrange C_{13}-\tau_3\restrrange C_{13}.$
Let $\tau^\prime\leq\tau$. By definition of prefix we have that $\tau^\prime=\tau\restr I$ with $I=\domain{\tau^\prime}\subseteq\domain{\tau}$. Hence we can state that 
$\tau^\prime\restrrange(V_i\setminus C_{13})=(\tau\restr I)\restrrange (V_i\setminus C_{13}).$
By lemma \ref{lm:restrprojtraj}
$(\tau\restr I)\restrrange (V_i\setminus C_{13})=(\tau\restrrange (V_i\setminus C_{13}))\restr I.$
By definition of parallel composition and again by lemma \ref{lm:restrprojtraj}
$(\tau\restrrange (V_i\setminus C_{13}))\restr I=(\tau_i\restrrange (V_i\setminus C_{13}))\restr I=(\tau_i\restr I)\restrrange (V_i\setminus C_{13}).$
Let $\tau^\prime_1=\tau_1\restr I$ and $\tau^\prime_3=\tau_3\restr I$, then
$(\tau_i\restr I)\restrrange (V_i\setminus C_{13})=\tau^\prime_i\restrrange (V_i\setminus C_{13}).$
Analogously, for the second statement of parallel composition of trajectories we have that
$\tau^\prime\restrrange C_{13}=(\tau\restr I)\restrrange C_{13}=(\tau\restrrange C_{13})\restr I=(\tau_1\restrrange C_{13})\restr I-(\tau_3\restrrange C_{13})\restr I=(\tau_1\restr I)\restrrange C_{13}-(\tau_3\restr I)\restrrange C_{13}=\tau^\prime_1\restrrange C_{13}-\tau^\prime_3\restrrange C_{13}.$ Hence $\tau'\in\Tt$.

  \item[T2] We want to prove that for every $\tau \in \Tt$ and every $t\in\domain{\tau}$, $\tau\unrhd t \in \Tt$. Let $\tau$ be a trajectory in $\Tt$. Let $i\in\{1,3\}$. By the definition of inplacement there exists $\tau_1\in \Tt_1,\tau_3\in\Tt_3$ such that
  $\tau\restrrange (V_i\setminus C_{13})=\tau_i\restrrange (V_i\setminus C_{13}), $
$\tau\restrrange  C_{13}=\tau_1\restrrange C_{13}-\tau_3\restrrange C_{13}.$
  Hence, since $\domain{\tau_1}=\domain{\tau_3}=\domain{\tau}$ and by lemma \ref{lm:restrsuffixtraj} we have that 
 $(\tau\unrhd t)\restrrange (V_i\setminus C_{13})=(\tau\restrrange (V_i\setminus C_{13}))\unrhd t=(\tau_i\restrrange (V_i\setminus C_{13}))\unrhd t= (\tau_i\unrhd t)\restrrange (V_i\setminus C_{13})$.
  Moreover 
$(\tau\unrhd t)\restrrange C_{13}=(\tau\restrrange C_{13})\unrhd t
 =(\tau_1\restrrange C_{13})\unrhd t-(\tau_3\restrrange C_{13})\unrhd t=
 (\tau_1\unrhd t)\restrrange C_{13}-(\tau_3\unrhd t)\restrrange C_{13}$.
  By axiom {\bf T2} applied to $\AutA_1$ and $\AutA_3$, for each $t\in\domain{\tau_1}, \tau_1\unrhd t\in\Tt_1$ and for each $t\in\domain{\tau_3}, \tau_3\unrhd t\in\Tt_3$. Hence $\tau\unrhd t \in \Tt$.

  \item[T3] We want to prove that set $\Tt$ is closed under concatenation. Let $\tau_0,\tau_1,\tau_2,\ldots$ be a sequence of trajectories in $\Tt$, such that, for each nonfinal index $j$ $\tau_j$ is closed and  $\lstate{\tau_j}=\fstate{\tau_{j+1}}$. Let $\tau$ be $\tau_0\concat\tau_1\concat\tau_2\concat\ldots$. 
  Let $i\in\{1,3\}$. By definition of inside operator for each $\tau_j$, $\exists \tau_{1j}, \tau_{3j}$ such that
  $\tau_j\restrrange (V_i\setminus C_{13})=\tau_{ij}\restrrange (V_i\setminus C_{13}),$ 
$\tau_j\restrrange  C_{13}=\tau_{1j}\restrrange C_{13}-\tau_{3j}\restrrange C_{13}.$
Let $\tau_i$ be $\tau_{i0}\concat\tau_{i1}\concat\tau_{i2}\concat\ldots, i\in\{1,3\}$. Hence by lemma \ref{lm:restrconcattraj}
$\tau\restrrange(V_i\setminus C_{13})=(\tau_0\restrrange (V_i\setminus C_{13}))\concat(\tau_1\restrrange (V_i\setminus C_{13}))\concat
(\tau_2\restrrange (V_i\setminus C_{13}))\concat\ldots=
(\tau_{i0}\restrrange (V_i\setminus C_{13}))\concat(\tau_{i1}\restrrange (V_i\setminus C_{13}))\concat
(\tau_{i2}\restrrange (V_i\setminus C_{13}))\concat\ldots=
(\tau_{i0}\concat\tau_{i1}\concat\tau_{i2}\concat\ldots)\restrrange (V_i\setminus C_{13})=\tau_i\restrrange (V_i\setminus C_{13})$.
Moreover
$\tau\restrrange  C_{13}=(\tau_0\restrrange  C_{13})\concat(\tau_1\restrrange  C_{13})\concat(\tau_2\restrrange  C_{13})\concat\ldots=
(\tau_{10}\restrrange C_{13}-\tau_{30}\restrrange C_{13})\concat(\tau_{11}\restrrange C_{13}-\tau_{31}\restrrange C_{13})\concat
(\tau_{12}\restrrange C_{13}-\tau_{32}\restrrange C_{13})\concat\ldots=
((\tau_{10}\restrrange C_{13})\concat(\tau_{11}\restrrange C_{13})\concat(\tau_{12}\restrrange C_{13})\concat\ldots)-
((\tau_{30}\restrrange C_{13})\concat(\tau_{31}\restrrange C_{13})\concat(\tau_{32}\restrrange C_{13})\concat\ldots)=
(\tau_{10}\concat\tau_{11}\concat\tau_{12}\concat\ldots)\restrrange C_{13} - (\tau_{30}\concat\tau_{31}\concat\tau_{32}\concat\ldots)\restrrange C_{13}
=\tau_1\restrrange C_{13}-\tau_3\restrrange C_{13}$.
Hence $\tau \in \Tt$.
 
 \end{description}
\end{proof}

{\bf Notation}: Executions of WAs are defined as executions of HIOAs: an \emph{execution fragment} of a WA $\AutA$ is an ($A,V$)-sequence $\alpha=\tau_0 a_1 \tau_1 a_2 \tau_2 \ldots$, where $a_i\in A$, $\tau_i\in \Tt$; if $\tau_i$ is not the last trajectory of $\alpha$, then $\lstate{\tau_i} \arrow{a_{i+1}} \fstate{\tau_{i+1}}$. An execution fragment $\alpha$ is defined to be an {\em execution\/} if $\fstate{\alpha}$ is a start state, that is, $\fstate{\alpha} \in \Theta$. Results on executions of HIOAs are valid also for WAs. 
A {\em trace} of an execution fragment $\alpha$ captures the external behavior of $\AutA$, i.e. what it is needed to identify an automaton from outside. Calling $E=I\cup O$, $Z=U\cup Y$, a trace of a WA $\AutA$ is, then, the ($E,Z$)-restriction of $\alpha$. We will call trace the ($E,Z$)-restriction of $\alpha$ at all levels, supposing that the external behavior is captured at all levels of the automaton. When needed, it is possible to restrict the behavior to a specific level $i$ by defining the ($E[i],Z[i]$)-restriction of $\alpha$, called $[i]$-trace. 
\begin{definition}\label{def:comp_imp}
Automata $\AutA_1$ and $\AutA_2$ are {\em comparable\/} if they have
the same external interface, that is, if world and local input and output sets of variables of $\AutA_1$ are equal to the corresponding sets of $\AutA_2$ and $E_1 = E_2$ at all levels.
If $\AutA_1$ and $\AutA_2$ are comparable then we say that
$\AutA_1$ {\em implements} $\AutA_2$, denoted by $\AutA_1 \leq \AutA_2$, if
$\htraces{\AutA_1} \subseteq \htraces{\AutA_2}$.
\end{definition}

In the following we state the results analogous to lemma 9 and proposition 2 of \cite{gandalf12} for inplacement. Note that the proofs of these results are analogous to the ones for the corresponding results in parallel composition (see also \cite{segala} and \cite{tec_rep}).
\begin{lemma}\label{lm:inexec}
Let $\AutA=\AutA_1[\AutA_2]$, let $\alpha$ be an execution fragment of $\AutA$ and let $\AutA_3$ be $\AutA_2\levup$. Then $\exists \alpha_1,\alpha_3$ execution fragments of $\AutA_1$ and $\AutA_3$ respectively, such that 
\begin{enumerate}
\item $\alpha\restr(A_i,V_i\setminus C_{13})=\alpha_i\restr(A_i,V_i\setminus C_{13}), i=1,3$, and 
\item $\alpha\restr(\emptyset, C_{13})=\alpha_1\restr(\emptyset, C_{13})-\alpha_3\restr(\emptyset, C_{13})$,
\end{enumerate} 
with $C_{13}=U_{w1}\cap Y_{w3}$.
\end{lemma}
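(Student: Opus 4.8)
The plan is to decompose the execution fragment $\alpha = \tau_0 a_1 \tau_1 a_2 \cdots$ of $\AutA = \AutA_1[\AutA_2]$ by handling trajectories and discrete steps separately, and then reassembling the pieces into execution fragments $\alpha_1$ of $\AutA_1$ and $\alpha_3$ of $\AutA_3 = \AutA_2\levup$. First I would treat the trajectories: for each $\tau_j$ appearing in $\alpha$, since $\tau_j \in \Tt$, the definition of inplacement (Definition \ref{def:inplace}, condition 8) guarantees there exist $\tau_{1j}\in\Tt_1$ and $\tau_{3j}\in\Tt_3$ with $\tau_j\restrrange(V_i\setminus C_{13}) = \tau_{ij}\restrrange(V_i\setminus C_{13})$ for $i\in\{1,3\}$, and $\tau_j\restrrange C_{13} = \tau_{1j}\restrrange C_{13} - \tau_{3j}\restrrange C_{13}$ (using the group convention for the trajectory domain, as stated just before Proposition \ref{th:autinside}). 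Next I would treat the discrete steps: for each action $a_k$ in $\alpha$, the state pair $(\lstate{\tau_{k-1}}, \fstate{\tau_k})$ is a transition of $\AutA$, so by condition 7 of Definition \ref{def:inplace}, its projection to $X_i$ is either a transition of $\AutA_i$ (when $a_k\in A_i$) or a stutter (when $a_k\notin A_i$) — in the latter case I insert a point trajectory on the $\AutA_i$ side, or simply drop $a_k$ from $\alpha_i$, following the standard HIOA projection construction.

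The key steps, in order, are: (1) build the candidate sequences $\alpha_1$ and $\alpha_3$ by replacing each $\tau_j$ with the corresponding $\tau_{ij}$ and each $a_k$ with $a_k$ itself when $a_k\in A_i$ (and collapsing adjacent trajectories via axiom {\bf T3} when $a_k\notin A_i$, so consecutive $\tau_{ij}$'s that were separated by a non-$A_i$ action get concatenated); (2) verify that each $\alpha_i$ is a genuine execution fragment of $\AutA_i$ — the trajectories lie in $\Tt_i$ by construction, the discrete steps are in $D_i$ by the projection of condition 7, and the matching condition $\lstate{\tau_{i,k-1}} \arrow{a_k} \fstate{\tau_{ik}}$ holds because the valuations agree after restriction to $X_i$ (note $X_i\cap C_{13}=\emptyset$ since $C_{13}$ consists of world variables at level 1 while $X_i$ contains no such internal world variables at that level in $\AutA_3$, and the world input/output variables in $C_{13}$ are not state variables); (3) check the two displayed equations in the lemma statement hold for the reassembled $\alpha_i$ — for the $(A_i, V_i\setminus C_{13})$-restriction this is immediate from step (1) plus the concatenation-commutes-with-restriction lemmas; for the $(\emptyset, C_{13})$-restriction (pure trajectory projection, no actions) this follows by applying Lemma \ref{lm:restrconcattraj} to distribute the projection over the concatenation and then using linearity of the difference operation over concatenation, exactly as in the {\bf T3} case of the proof of Proposition \ref{th:autinside}.

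The main obstacle I expect is the careful bookkeeping of the discrete steps when an action $a_k$ belongs to only one of $A_1, A_3$ (or neither): one must make sure that the trajectory pieces on the side that does not participate are concatenated correctly (invoking {\bf T3} and the hypothesis $\lstate{\tau_j} = \fstate{\tau_{j+1}}$ to license the concatenation), so that $\alpha_i$ remains a well-formed $(A_i, V_i)$-sequence with the trajectory/action alternation intact, while simultaneously preserving the restriction identities. The other point requiring attention is the extra clause in Definition \ref{def:inplace}, condition 8, forcing $\tau_{3j}(u)(t)=0$ for $u\in U_{w3}\setminus V$; this is automatically respected since we take the very $\tau_{3j}$ supplied by that definition, but it should be noted that the reassembled $\alpha_3$ inherits this property via Lemma \ref{lm:restrconcattraj}. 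Everything else is routine and parallels the corresponding result for parallel composition in \cite{gandalf12} and \cite{tec_rep}.
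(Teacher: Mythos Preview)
Your proposal is correct and matches the paper's approach: the paper does not give an explicit proof of Lemma~\ref{lm:inexec} but simply states that it is analogous to the corresponding projection lemma for parallel composition (lemma~9 of \cite{gandalf12}, with details in \cite{segala,tec_rep}), and your plan is precisely that analogous argument carried over to the inplacement setting. One minor remark: your justification that $X_i\cap C_{13}=\emptyset$ is slightly roundabout---it follows more directly from $C_{13}\subseteq U_{w1}\subseteq U_1$ (hence disjoint from $X_1$) and $C_{13}\subseteq Y_{w3}\subseteq Y_3$ (hence disjoint from $X_3$), using the disjointness of the $U,X,Y$ sets in each component.
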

\begin{proposition}
  \label{lemma-htraces-proj-in}
Let $\AutA=\AutA_1[\AutA_2]$, let $\beta$ be a trace of $\AutA$ and let $\AutA_3$ be $\AutA_2\levup$. Then $\exists \beta_1,\beta_3$ traces of $\AutA_1,\AutA_3$ respectively, such that 
\begin{enumerate}
\item $\beta\restr(E_i,Z_i\setminus C_{13})=\beta_i\restr(E_i,Z_i\setminus C_{13}), i=1,3$ and 
\item $\beta\restr(\emptyset, C_{13})=\beta_1\restr(\emptyset, C_{13})-\beta_3\restr(\emptyset, C_{13})$,
\end{enumerate} 
with $C_{13}=U_{w1}\cap Y_{w3}$.
\end{proposition}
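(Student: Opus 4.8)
The plan is to reduce the claim to the execution‑level statement already available as Lemma~\ref{lm:inexec}; the whole argument is structurally identical to the one proving the analogous fact for parallel composition in \cite{gandalf12}, with the sum over $Y_{w1}\cap Y_{w2}$ replaced by the difference over $C_{13}=U_{w1}\cap Y_{w3}$. By definition a trace $\beta$ of $\AutA=\AutA_1[\AutA_2]$ is $\beta=\trace{\alpha}$, the $(E,Z)$‑restriction of some execution $\alpha$ of $\AutA$; write $\AutA_3=\AutA_2\levup$. First I would apply Lemma~\ref{lm:inexec} to the execution fragment $\alpha$, obtaining execution fragments $\alpha_1$ of $\AutA_1$ and $\alpha_3$ of $\AutA_3$ with
\[
\alpha\restr(A_i,V_i\setminus C_{13})=\alpha_i\restr(A_i,V_i\setminus C_{13}),\qquad
\alpha\restr(\emptyset,C_{13})=\alpha_1\restr(\emptyset,C_{13})-\alpha_3\restr(\emptyset,C_{13}),
\]
$i\in\{1,3\}$. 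I would then note that $\alpha_1,\alpha_3$ are in fact \emph{executions}: since the members of $C_{13}$ are world input/output variables we have $X_i\cap C_{13}=\emptyset$, so $X_i\subseteq V_i\setminus C_{13}$, and comparing first states in the first identity above (then restricting to $X_i$) gives $\fstate{\alpha_i}=\fstate{\alpha}\restr X_i$; as $\fstate{\alpha}\in\Theta$ and $\Theta$ is defined in Definition~\ref{def:inplace} by $\val{x}\restr X_1\in\Theta_1\wedge\val{x}\restr X_3\in\Theta_3$, this yields $\fstate{\alpha_i}\in\Theta_i$. Set $\beta_1\deq\trace{\alpha_1}$ and $\beta_3\deq\trace{\alpha_3}$, which are traces of $\AutA_1$ and $\AutA_3$ respectively.

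It then remains to derive the two displayed equalities for $\beta,\beta_1,\beta_3$ from those for $\alpha,\alpha_1,\alpha_3$. The only tool needed beyond Lemma~\ref{lm:inexec} is the elementary fact that iterating $\restr$ intersects the restriction sets (immediate from the definition of $\restr$ recalled in Section~\ref{sec:wa}), together with the interface inclusions $E_i\subseteq E$, $C_{13}\subseteq Z\cap Z_1\cap Z_3$, and $Z_i\setminus C_{13}\subseteq Z$ — the last one holding once one agrees to read every restriction set intersected with the variables of $\AutA$, so that the uncaptured world inputs $U_{w3}\setminus V$ of $\AutA_3$ (which are identically $0$ in $\alpha_3$ by the last clause of Definition~\ref{def:inplace} and simply absent from $\alpha$) do not intervene. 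Granting these, for property~1
\[
\beta\restr(E_i,Z_i\setminus C_{13})=\bigl(\alpha\restr(E,Z)\bigr)\restr(E_i,Z_i\setminus C_{13})=\alpha\restr(E_i,Z_i\setminus C_{13}),
\]
and likewise $\beta_i\restr(E_i,Z_i\setminus C_{13})=\alpha_i\restr(E_i,Z_i\setminus C_{13})$; since $E_i\subseteq A_i$ and $Z_i\subseteq V_i$, these two coincide by further restricting the first identity of Lemma~\ref{lm:inexec} to $(E_i,Z_i\setminus C_{13})$. Property~2 is obtained the same way: $C_{13}\subseteq Z\cap Z_i$ gives $\beta\restr(\emptyset,C_{13})=\alpha\restr(\emptyset,C_{13})$ and $\beta_i\restr(\emptyset,C_{13})=\alpha_i\restr(\emptyset,C_{13})$, and the second identity of Lemma~\ref{lm:inexec} then transfers verbatim because restriction to a fixed variable set distributes over the pointwise difference $-$ (here one uses the standing assumption that the time domain is a group).

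I expect the main obstacle to be purely in the interface bookkeeping rather than in any real argument, the substantive content being hidden inside Lemma~\ref{lm:inexec} (which in turn mirrors the parallel‑composition case). The delicate points to get right are: (i) the uncaptured world inputs $U_{w3}\setminus V$, which exist in $\AutA_3$ but not in $\AutA$, must be shown not to spoil the equalities — one should restrict to variables of $\AutA$ throughout, or observe that both sides carry no external information there; (ii) $C_{13}$ lies only at level~$1$ and is contained in every $Z$‑set that an external restriction can see, so it survives each restriction intact; and (iii) taking traces preserves the level function, so that $\beta_i$ is genuinely a trace of the component $\AutA_i$ and not of a flattened automaton. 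None of these is deep, but a careless argument slips on exactly these points.
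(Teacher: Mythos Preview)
Your proposal is correct and follows exactly the route the paper indicates: the paper does not spell out a proof for this proposition but states that it is ``analogous to the [one] for the corresponding results in parallel composition'' (i.e., reduce to the execution-level Lemma~\ref{lm:inexec} and then pass to traces by restricting), which is precisely what you do. Your careful bookkeeping on initial states, on $U_{w3}\setminus V$, and on $C_{13}\subseteq Z\cap Z_1\cap Z_3$ fills in the details the paper leaves implicit.
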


The following theorems state the substitutivity properties of inplacement for implementation. 
\begin{theorem}
\label{th:sub_imp_in1}
Let $\AutA_1$ and $\AutA_2$ be comparable WAs with $\AutA_1 \leq \AutA_2$.
Let $\AutB$ be a WA compatible with each of $\AutA_1$ and $\AutA_2$. Then $\AutA_1 [\AutB]$ and $\AutA_2 [\AutB]$ are comparable and $\AutA_1 [\AutB] \leq \AutA_2 [\AutB]$.
\end{theorem}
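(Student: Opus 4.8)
The plan is to prove the two assertions of the statement in turn --- comparability of $\AutA_1[\AutB]$ and $\AutA_2[\AutB]$, and then $\AutA_1[\AutB]\leq\AutA_2[\AutB]$ --- writing $\AutA_3=\AutB\levup$ and $C_{13}=U_{w1}\cap Y_{w3}$ throughout.

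Comparability follows by unfolding Definition~\ref{def:inplace}. Since $\AutA_1$ and $\AutA_2$ are comparable, at every level they have equal world and local input/output variable sets and equal external action sets, so each clause of Definition~\ref{def:inplace} that produces a piece of the external interface of $\AutA_i[\AutB]$ yields the same set for $i=1$ and $i=2$; for instance $U_{w1}=U_{w2}$ gives $C_{13}=U_{w2}\cap Y_{w3}$, so both inplacements are built around the same shared set, and the external action set of $\AutA_i[\AutB]$ equals $E_i\cup E_3$, the same for $i=1,2$. Hence $\AutA_1[\AutB]$ and $\AutA_2[\AutB]$ have identical external interfaces; that $\AutA_2[\AutB]$ is defined at all is part of the hypothesis, namely that $\AutB$ is inplace compatible with $\AutA_2$.

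For the trace inclusion I would take an arbitrary $\beta\in\htraces{\AutA_1[\AutB]}$ and apply Proposition~\ref{lemma-htraces-proj-in} to get traces $\beta_1$ of $\AutA_1$ and $\beta_3$ of $\AutA_3$ with $\beta\restr(E_i,Z_i\setminus C_{13})=\beta_i\restr(E_i,Z_i\setminus C_{13})$ for $i=1,3$ and $\beta\restr(\emptyset,C_{13})=\beta_1\restr(\emptyset,C_{13})-\beta_3\restr(\emptyset,C_{13})$. Since $\AutA_1\leq\AutA_2$ we have $\beta_1\in\htraces{\AutA_2}$, while $\beta_3$ is unchanged. I would then invoke the pasting counterpart of Proposition~\ref{lemma-htraces-proj-in}: a trace of $\AutA_2$ and a trace of $\AutA_3$ that agree off $C_{13}$ on their shared external variables, and whose components on the part of $U_{w3}$ not captured by the composite are identically $0$, combine --- agreeing off $C_{13}$, subtracting on $C_{13}$ --- into a unique $(E,Z)$-sequence that is a trace of $\AutA_2[\AutB]$; this is proved exactly as the pasting result for parallel composition in~\cite{gandalf12,segala,tec_rep}. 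Its hypotheses hold for $\beta_1,\beta_3$: off $C_{13}$ they agree on shared external variables, which is a property of the traces themselves inherited from both being projections of $\beta$; and the $0$ side condition carries over since, by comparability of $\AutA_1$ and $\AutA_2$, the composites $\AutA_1[\AutB]$ and $\AutA_2[\AutB]$ capture the same part of $U_{w3}$. The $(E,Z)$-sequence so produced is, by the very equations that define it, none other than $\beta$ --- those equations determine it because every external variable of $\AutA_2[\AutB]$ lies in some $Z_i\setminus C_{13}$ with $i\in\{1,3\}$, or in $C_{13}$, and every external action lies in some $E_i$. Hence $\beta\in\htraces{\AutA_2[\AutB]}$.

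The main obstacle is the pasting step: only the projection direction is supplied in the excerpt (Proposition~\ref{lemma-htraces-proj-in}, and Lemma~\ref{lm:inexec} at the execution level), so its converse must be established. That is where the work lies: decompose the given traces of $\AutA_2$ and $\AutA_3$ along a common skeleton of external actions and maximal trajectory pieces, padding with point trajectories where one side idles; reassemble the pieces with Lemmas~\ref{lm:restrprojtraj}--\ref{lm:restrconcattraj} and the group structure of the time domain to carry out the subtraction on $C_{13}$; lift the result to an execution of $\AutA_2[\AutB]$ by checking the clauses of Definition~\ref{def:inplace}, in particular the $0$ side condition on $U_{w3}\setminus V$; and then take its trace. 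The rest is routine bookkeeping: that inplacement introduces no extra hiding, so the external action set is $E_1\cup E_3$ and the trace alphabets of $\AutA_1[\AutB]$ and $\AutA_2[\AutB]$ coincide, and that $\restr$ and $\restrrange$ interact with $+$ and $-$ as the trajectory lemmas state.
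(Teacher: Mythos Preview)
Your argument is correct, but you take a different route from the paper. You work entirely at the trace level: project $\beta$ via Proposition~\ref{lemma-htraces-proj-in}, observe that $\beta_1\in\htraces{\AutA_2}$ by hypothesis, and then appeal to a trace-pasting converse to reassemble $\beta$ as a trace of $\AutA_2[\AutB]$. The paper instead works at the execution level throughout: it starts from an execution $\alpha$ of $\AutA_1[\AutB]$, decomposes it via Lemma~\ref{lm:inexec} into executions $\alpha_1,\alpha_B$, uses the padding machinery (Lemmas~\ref{lm:padbuild} and~\ref{lm:padtrace}) to align all trajectory segments, picks an execution $\alpha_2$ of $\AutA_2$ with $\htrace{\alpha_2}=\htrace{\alpha_1}$, pads it to the same skeleton, and then explicitly builds the combined padded execution $\padex'$ of $\AutA_2[\AutB]$ trajectory-by-trajectory, checking the clauses of Definition~\ref{def:inplace}. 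Your approach is cleaner to state and closer in spirit to the classical HIOA substitutivity proofs, but it front-loads the real work into the pasting lemma, whose proof --- as you acknowledge --- requires exactly the alignment-and-reassembly that the paper carries out inline. The paper's approach is more self-contained given what is actually proved in the text (only the projection direction is stated), at the cost of exposing the padding bookkeeping directly in the theorem's proof.
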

\begin{proof}
Let $\alpha$ be an execution of $\AutA_1[\AutB]$. By lemma \ref{lm:inexec}, two executions $\alpha_1, \alpha_B$ exist, such that $\alpha_1\in\hexec{\AutA_1}$, $\alpha_B\in\hexec{\AutB\levup}$ and:
$\alpha\restr(A_1,V_1\setminus C_{1B})=\alpha_1\restr(A_1,V_1\setminus C_{1B})$,
$\alpha\restr(A_B,V_B\setminus C_{1B})=\alpha_B\restr(A_B,V_B\setminus C_{1B})$, 
$\alpha\restr(\emptyset, C_{1B})=\alpha_1\restr(\emptyset, C_{1B})-\alpha_B\restr(\emptyset, C_{1B})$,
with $C_{1B}=U_{w1}\cap Y_{wB}$.
By lemma \ref{lm:padbuild} we can take paddings of $\alpha, \alpha_1, \alpha_B$ such that the $j^{th}$ trajectory has the same length for all $j$. Let these paddings be $\padex, \padex_1, \padex_B$ respectively with $\padex=\tau_0 a_1 \tau_1 a_2 \tau_2 a_3 \ldots$, $\padex_1=\tau_{01} a'_{1} \tau_{11} a'_{2} \tau_{21} a'_{3} \ldots$ and $\padex_B=\tau_{0B} a''_{1} \tau_{1B} a''_{2} \tau_{2B} a''_{3} \ldots$.
Since $\AutA_1 \leq \AutA_2$ and by compatibility, we can find an execution $\alpha_2$ of $\AutA_2$ with the same trace of $\alpha_1$ and a padding of $\alpha_2$ following lemma \ref{lm:padbuild}. We write $\padex_2=\tau_{02} a'''_{1} \tau_{12} a'''_{2} \tau_{22} a'''_{3} \ldots$. 
By the definition of composition the execution of $\AutA_2[\AutB]$ obtained by $\padex_2$ and $\padex_B$ will be $\padex'=\tau'_0 b_1 \tau'_1 b_2 \tau'_2 b_3 \ldots$, where
$\tau'_j\restrrange (V_2\setminus C_{2B})=\tau_{j2}\restrrange (V_2\setminus  C_{2B})$,
$\tau'_j\restrrange (V_B\setminus C_{2B})=\tau_{jB}\restrrange (V_B\setminus  C_{2B})$,
$\tau'_j\restrrange  C_{2B}=\tau_{j2}\restrrange  C_{2B}-\tau_{jB}\restrrange C_{2B}\}$,
where $C_{2B}=U_{w2}\cap Y_{wB}$. This is valid even if the trajectories in the padded executions have not the length of the original trajectories, by definition of prefix of a trajectory and prefix closure of trajectories in a WA.
Actions $b_i$ might be different, but by construction, compatibility and lemma \ref{lm:padtrace} we have that $\padex'$ has the same trace of $\padex$ hence of $\alpha$. Indeed the (padded) executions can differ only in their internal variables (state), but they do not influences the traces (external variables). For this reason we can state that $\htraces{\AutA_1[\AutB]}\subseteq\htraces{\AutA_2[\AutB]}$, hence, by definition of implementation, $\AutA_1 [\AutB] \leq \AutA_2 [\AutB]$.
\end{proof}

\begin{lemma}\label{lm:levupimp}
Let $\AutB_1$ and $\AutB_2$ be comparable WAs with $\AutB_1 \leq \AutB_2$. Then $(\AutB_1\levup) \leq (\AutB_2\levup)$.
\end{lemma}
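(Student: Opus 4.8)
The plan is to show that the level-shift operation $\levup$ is essentially a bijective renaming of variables and actions — it increments every level by $1$ but changes nothing else about the structure of a WA — and that such a renaming acts on executions and traces componentwise, so it cannot alter trace inclusion. First I would record the elementary observation that, for any WA $\AutB$, the tuple $\AutB\levup$ has exactly the same sets $Q,\Theta,D,\Tt$ and the same partition of variables into $U,X,Y$ and actions into $I,H,O$; only the values of the level function $l$ have been uniformly increased by $1$. In particular the static and dynamic types of each variable are untouched, so $\vals{X}$, the set of trajectories on $V$, the transition relation, etc., are literally the same objects. Consequently $\hexec{\AutB\levup} = \hexec{\AutB}$ and $\htraces{\AutB\levup} = \htraces{\AutB}$ as sets (the $(E,Z)$-restriction that defines a trace is insensitive to levels, since $E = I\cup O$ and $Z = U\cup Y$ are the same sets before and after the shift).

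Next I would check that comparability is preserved: by Definition \ref{def:comp_imp}, $\AutB_1$ and $\AutB_2$ comparable means their world and local input/output variable sets coincide and $E_1 = E_2$ at all levels. Applying $\levup$ to both shifts every level index in both automata by the same amount, so the equalities of the variable sets and of $E_1,E_2$ persist — now read "at all levels $\geq 1$", which is exactly the condition for $\AutB_1\levup$ and $\AutB_2\levup$ to be comparable. (One should also note that $\AutB_i\levup$ is still a WA, which is immediate from the structural observation above together with Definition \ref{automaton}, since none of the axioms T1–T3 nor the constraints on $Q,\Theta,D$ mention levels; if one insists on the convention $\ws{X}[0]=\emptyset$, it holds vacuously because after the shift there are no objects of level $0$ at all.)

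Finally I would put the pieces together: $\AutB_1 \leq \AutB_2$ means $\htraces{\AutB_1} \subseteq \htraces{\AutB_2}$ by Definition \ref{def:comp_imp}; by the identity $\htraces{\AutB_i\levup} = \htraces{\AutB_i}$ established in the first step, this gives $\htraces{\AutB_1\levup} \subseteq \htraces{\AutB_2\levup}$; since $\AutB_1\levup$ and $\AutB_2\levup$ are comparable, Definition \ref{def:comp_imp} then yields $(\AutB_1\levup) \leq (\AutB_2\levup)$.

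I do not expect a genuine obstacle here; the only thing that needs care is making precise the claim that "$\levup$ changes nothing but the levels," i.e. that the paper's convention of treating variables at different levels as distinct is consistent with simply reinterpreting the same underlying data at a shifted level. The mild subtlety is bookkeeping: if one takes "variables at different levels are distinct" literally, then $\AutB\levup$ technically has a different carrier set of variables, and one must exhibit the obvious level-shift bijection $\phi$ and verify that $\phi$ commutes with $\restr$, with the $(E,Z)$-restriction, and with concatenation of trajectories — a routine check, analogous to the renaming arguments already used implicitly for parallel composition — after which $\phi$ carries $\hexec{\AutB}$ and $\htraces{\AutB}$ onto $\hexec{\AutB\levup}$ and $\htraces{\AutB\levup}$ bijectively, and trace inclusion transports along $\phi$.
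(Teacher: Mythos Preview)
Your proposal is correct and follows exactly the line the paper takes; the paper's own proof is the single sentence ``Proved by the definitions of executions, traces and $\levup$ operator,'' and what you have written is a careful unpacking of precisely that claim, including the bookkeeping caveat about treating $\levup$ as a bijective renaming.
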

\begin{proof}
Proved by the definitions of executions, traces and $\levup$ operator.
\end{proof}

\begin{theorem}
\label{th:sub_imp_in2}
Let $\AutB_1$ and $\AutB_2$ be comparable WAs with $\AutB_1 \leq \AutB_2$.
Let $\AutA$ be a WA compatible with each of $\AutB_1$ and $\AutB_2$. Then $\AutA [\AutB_1]$ and $\AutA [\AutB_2]$ are comparable and $\AutA [\AutB_1] \leq \AutA [\AutB_2]$.
\end{theorem}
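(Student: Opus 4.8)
The plan is to mirror the proof of Theorem \ref{th:sub_imp_in1}, but with the substitution taking place on the inside automaton, bridging to that argument through Lemma \ref{lm:levupimp}.

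First I would dispose of comparability. Since $\AutB_1$ and $\AutB_2$ are comparable they have the same external interface, hence so do $\AutB_1\levup$ and $\AutB_2\levup$ (the $\levup$ operator only shifts levels and so preserves comparability). By Definition \ref{def:inplace}, the world inputs/outputs of $\AutA[\AutB_i]$ at levels $0,1$ are taken only from $\AutA$, while at levels $\geq 2$ they are unions of the corresponding sets of $\AutA$ and $\AutB_i\levup$, and the action sets $I,O,H$ are built from those of $\AutA$ and $\AutB_i\levup$; since the $\AutB_i\levup$ components coincide for $i=1,2$, the external interfaces of $\AutA[\AutB_1]$ and $\AutA[\AutB_2]$ coincide and the two are comparable. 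In particular $C_{1} := U_{wA}\cap Y_{wB_1\levup}$ and $C_{2} := U_{wA}\cap Y_{wB_2\levup}$ coincide, since $Y_{wB_1\levup}=Y_{wB_2\levup}$; write $C$ for this common set.

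For the implementation part, let $\alpha$ be an execution of $\AutA[\AutB_1]$. By Lemma \ref{lm:inexec} (with $\AutA$ in the role of $\AutA_1$ and $\AutB_1\levup$ in the role of $\AutA_3$) there are $\alpha_A\in\hexec{\AutA}$ and $\alpha_{B_1}\in\hexec{\AutB_1\levup}$ with $\alpha\restr(A_A,V_A\setminus C)=\alpha_A\restr(A_A,V_A\setminus C)$, $\alpha\restr(A_{B_1\levup},V_{B_1\levup}\setminus C)=\alpha_{B_1}\restr(A_{B_1\levup},V_{B_1\levup}\setminus C)$, and $\alpha\restr(\emptyset,C)=\alpha_A\restr(\emptyset,C)-\alpha_{B_1}\restr(\emptyset,C)$. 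By Lemma \ref{lm:levupimp}, $\AutB_1\levup\leq\AutB_2\levup$, so $\htrace{\alpha_{B_1}}\in\htraces{\AutB_2\levup}$ and there is $\alpha_{B_2}\in\hexec{\AutB_2\levup}$ with $\htrace{\alpha_{B_2}}=\htrace{\alpha_{B_1}}$. Using Lemma \ref{lm:padbuild} I would pad $\alpha,\alpha_A,\alpha_{B_1},\alpha_{B_2}$ so that corresponding trajectories have equal length, then recombine the paddings of $\alpha_A$ and $\alpha_{B_2}$ into an execution $\padex'$ of $\AutA[\AutB_2]$ according to Definition \ref{def:inplace} --- exactly the recombination step used in the proof of Theorem \ref{th:sub_imp_in1}, justified by the converse of Lemma \ref{lm:inexec} (which holds analogously to parallel composition). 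Here it is essential that the ``difference'' semantics of inplacement imposes no independent constraint coupling $\alpha_A$ and $\alpha_{B_2}$ on $C$, so once trajectory lengths are aligned the recombination always succeeds. Finally, by construction, compatibility and Lemma \ref{lm:padtrace} the trace of $\padex'$ equals that of $\alpha$: on $C\subseteq Y_{wB\levup}$ the trace value is (the restriction of) $\tau_A-\tau_{B_2}$, which coincides with $\tau_A-\tau_{B_1}$ because $C$ is external and $\htrace{\alpha_{B_2}}=\htrace{\alpha_{B_1}}$; on the remaining external variables contributed by $\AutA$ the behavior is literally that of $\alpha_A$; on those contributed by $\AutB\levup$ it has the right trace again because $\htrace{\alpha_{B_2}}=\htrace{\alpha_{B_1}}$; and internal/hidden differences are invisible to the trace. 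Hence $\htraces{\AutA[\AutB_1]}\subseteq\htraces{\AutA[\AutB_2]}$, i.e. $\AutA[\AutB_1]\leq\AutA[\AutB_2]$.

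The main obstacle is the recombination/padding step: one must choose the paddings of $\alpha$ (equivalently $\alpha_{B_1}$) and of $\alpha_{B_2}$ compatibly so that the combined trajectory on $C$ matches in the trace, and must check that gluing the discrete steps and trajectories of $\alpha_A$ and $\alpha_{B_2}$ as in Definition \ref{def:inplace} yields a legal execution of $\AutA[\AutB_2]$. Both points are handled exactly as in the parallel-composition theory and in the proof of Theorem \ref{th:sub_imp_in1}; the only genuinely new ingredient is the reduction via Lemma \ref{lm:levupimp}, which lets the substitution on $\AutB_i$ be pushed through the $\levup$ operator before the composition argument is applied.
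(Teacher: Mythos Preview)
Your proposal is correct and follows essentially the same approach as the paper's proof: decompose an execution of $\AutA[\AutB_1]$ via Lemma~\ref{lm:inexec}, use Lemma~\ref{lm:levupimp} to replace the $\AutB_1\levup$ component by a trace-equivalent execution of $\AutB_2\levup$, pad via Lemma~\ref{lm:padbuild}, recombine using Definition~\ref{def:inplace}, and conclude equal traces via Lemma~\ref{lm:padtrace}. You add an explicit comparability argument (in particular the observation that $C_1=C_2$) that the paper leaves implicit, but the core argument is the same.
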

\begin{proof}
Let $\alpha$ be an execution of $\AutA[\AutB_1]$. By lemma \ref{lm:inexec}, two executions $\alpha_A, \alpha_1$ exist, such that $\alpha_A\in\hexec{\AutA}$, $\alpha_1\in\hexec{\AutB_1\levup}$ and:
$\alpha\restr(A_A,V_A\setminus C_{A1})=\alpha_A\restr(A_A,V_A\setminus C_{A1})$,
$\alpha\restr(A_1,V_1\setminus C_{A1})=\alpha_1\restr(A_1,V_1\setminus C_{A1})$, 
$\alpha\restr(\emptyset, C_{A1})=\alpha_A\restr(\emptyset, C_{A1})-\alpha_1\restr(\emptyset, C_{A1})$,
with $C_{A1}=U_{wA}\cap Y_{w1}$.
By lemma \ref{lm:padbuild} we can take paddings of $\alpha, \alpha_A, \alpha_1$ such that the $j^{th}$ trajectory has the same length for all $j$. Let these paddings be $\padex, \padex_A, \padex_1$ respectively with $\padex=\tau_0 a_1 \tau_1 a_2 \tau_2 a_3 \ldots$, $\padex_A=\tau_{0A} a'_{1} \tau_{1A} a'_{2} \tau_{2A} a'_{3} \ldots$ and $\padex_1=\tau_{01} a''_{1} \tau_{11} a''_{2} \tau_{21} a''_{3} \ldots$.
Since $\AutB_1 \leq \AutB_2$ (hence $(\AutB_1\levup) \leq (\AutB_2\levup)$ by lemma \ref{lm:levupimp}) and by compatibility, we can find an execution $\alpha_2$ of $\AutB_2$ with the same trace of $\alpha_1$ and a padding of $\alpha_2$ following lemma \ref{lm:padbuild}. We write $\padex_2=\tau_{02} a'''_{1} \tau_{12} a'''_{2} \tau_{22} a'''_{3} \ldots$. 
By the definition of composition the execution of $\AutA[\AutB_2]$ obtained by $\padex_A$ and $\padex_2$ will be $\padex'=\tau'_0 b_1 \tau'_1 b_2 \tau'_2 b_3 \ldots$, where
$\tau'_j\restrrange (V_A\setminus C_{A2})=\tau_{jA}\restrrange (V_A\setminus  C_{A2})$,
$\tau'_j\restrrange (V_2\setminus C_{A2})=\tau_{j2}\restrrange (V_2\setminus  C_{A2})$,
$\tau'_j\restrrange  C_{A2}=\tau_{jA}\restrrange  C_{A2}-\tau_{j2}\restrrange C_{A2}\}$,
where $C_{A2}=U_{wA}\cap Y_{w2}$. This is valid even if the trajectories in the padded executions have not the length of the original trajectories, by definition of prefix of a trajectory and prefix closure of trajectories in a WA.
Actions $b_i$ might be different, but by construction, compatibility and lemma \ref{lm:padtrace} we have that $\padex'$ has the same trace of $\padex$ hence of $\alpha$. Indeed the (padded) executions can differ only in their internal variables (state), but they do not influences the traces (external variables). For this reason we can state that $\htraces{\AutA[\AutB_1]}\subseteq\htraces{\AutA[\AutB_2]}$, hence, by definition of implementation, $\AutA [\AutB_1] \leq \AutA[\AutB_2]$.
\end{proof}

{\bf Notation for proofs}: The interested reader can find all the definitions and notation in \cite{gandalf12,tec_rep}.
\begin{definition}
A {\em padded execution} of a WA $\AutA$ is an $(A\cup\{\padact\},V)-$sequence $\padex=\tau_0 a_1 \tau_1 a_2 \tau_2 a_3\ldots$ such that if $a_i=\padact$ then $\lstate{\tau_{i-1}}=\fstate{\tau_i}$.
\end{definition}
\begin{definition}{\em Padding.}\\
We call {\em padding} of an execution $\alpha$ any padded execution obtained by $\alpha$ by extending the actions set with $\padact$.
\end{definition}
%
%
%
%
%
\begin{lemma}\label{lm:padtrace}
Let $\alpha$ be an execution of $\AutA$ and $\padex$ a padding of $\alpha$, then $\htrace{\alpha}=\htrace{\padex}$.
\end{lemma}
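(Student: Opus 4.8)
The plan is to unfold the definitions of padding and of trace and check that adding $\padact$-labelled transitions contributes nothing observable. Recall that a padding $\padex$ of an execution $\alpha=\tau_0 a_1\tau_1 a_2\tau_2\ldots$ is obtained by inserting, at various points, new actions $a_i=\padact$ with the constraint $\lstate{\tau_{i-1}}=\fstate{\tau_i}$, so that each inserted $\padact$ merely splits a trajectory into two consecutive trajectories (or inserts a point trajectory) without changing the underlying state evolution. The trace $\htrace{\padex}$ is the $(E,Z)$-restriction of $\padex$ at all levels, where $E=I\cup O$ and $Z=U\cup Y$; in particular the restriction operation on an $(A\cup\{\padact\},V)$-sequence discards every action not in $E$, and since $\padact\notin A\supseteq E$, every padding action is erased when forming the trace.

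First I would argue that it suffices to treat a single insertion of one $\padact$ action, and then iterate (or take a limit, using the fact that trace is defined pointwise on the concatenated structure, as in Lemma \ref{lm:restrconcattraj}): if $\padex'$ is obtained from $\alpha$ by replacing a trajectory $\tau_i$ by $\tau_i'\,\padact\,\tau_i''$ with $\tau_i=\tau_i'\concat\tau_i''$ and $\lstate{\tau_i'}=\fstate{\tau_i''}$, then when we perform the $(E,Z)$-restriction the inserted $\padact$ is dropped and the two adjacent trajectory pieces $\tau_i'\restrrange Z$ and $\tau_i''\restrrange Z$ are re-concatenated. Here I would invoke the standard fact (from \cite{segala}, and compatible with Lemma \ref{lm:restrconcattraj}) that the restriction of a sequence commutes with splitting/merging trajectories at a state-matching point: $(\tau_i'\concat\tau_i'')\restrrange Z=(\tau_i'\restrrange Z)\concat(\tau_i''\restrrange Z)$, and the last-valuation convention in concatenation makes these identical. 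The trajectory pieces outside position $i$ and all the genuine actions $a_j\in A$ are untouched by the insertion, so the two restricted sequences coincide.

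For the general case, a padding performs countably many such insertions (and possibly trailing $\padact$'s), so $\htrace{\padex}$ and $\htrace{\alpha}$ are built from the same family of restricted trajectory fragments, concatenated in the same order at the same matching states, with only $\padact$-labels removed in between; by Lemma \ref{lm:restrconcattraj} the two concatenations agree. Hence $\htrace{\alpha}=\htrace{\padex}$. The main obstacle, such as it is, is bookkeeping rather than mathematics: making precise that the $(E,Z)$-restriction of an $(A\cup\{\padact\},V)$-sequence is literally the same operation as the $(E,Z)$-restriction of an $(A,V)$-sequence once $\padact\notin E$ is noted, and that the trajectory-concatenation conventions line up so that splitting a trajectory at a point where successive valuations agree leaves the restricted trace invariant — this is exactly what Lemmas \ref{lm:restrprojtraj}--\ref{lm:restrconcattraj} are there to supply.
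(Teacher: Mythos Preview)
Your argument is correct: the key observations are that $\padact\notin A\supseteq E$, so the $(E,Z)$-restriction drops every inserted $\padact$; that by the padding constraint the state at the split point matches, so the adjacent trajectory pieces re-concatenate; and that Lemma~\ref{lm:restrconcattraj} guarantees the projected concatenation agrees with the concatenation of the projections. Reducing to a single insertion and then iterating is a clean way to organise the bookkeeping.

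As for comparison with the paper: the paper does not actually prove Lemma~\ref{lm:padtrace}. It is stated without proof (the surrounding text defers details to \cite{gandalf12,tec_rep}), so there is no in-paper argument to compare yours against. Your proof is the standard one and would be appropriate to include.

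One small point of care: you describe a padding as being obtained by ``splitting a trajectory into two consecutive trajectories (or inserting a point trajectory)''. The paper's Definition of padding is somewhat informal (``any padded execution obtained by $\alpha$ by extending the actions set with $\padact$''), so to be fully rigorous you would want to state explicitly that the underlying concatenated trajectory and the non-$\padact$ actions of $\padex$ coincide with those of $\alpha$, which is implicit in your reading but not literally spelled out in the paper's definition. Once that is made precise, your argument goes through without change.
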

%
%
\begin{lemma}\label{lm:padbuild}
Given $n$ executions, it is always possible to find $n$ paddings of these executions such that all corresponding trajectories have the same length.
\end{lemma}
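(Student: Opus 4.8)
The plan is to proceed by induction, but first it is cleanest to reduce the $n$-execution case to a repeated application of the $2$-execution case, or equivalently to work directly with all $n$ simultaneously: the construction is symmetric in the executions, so I will describe it for a finite collection $\alpha^1,\dots,\alpha^n$ at once. Write $\alpha^k = \tau^k_0\, a^k_1\, \tau^k_1\, a^k_2\, \tau^k_2 \cdots$. The idea is to interleave the "breakpoints" of all $n$ executions: we refine each trajectory $\tau^k_j$ into a concatenation of subtrajectories by cutting it at the time instants that correspond to the lengths of the trajectories appearing in the other executions, and we fill in the gaps in the action sequence with the padding action $\padact$. Concretely, I would build, for each $k$, a padding $\padex^k$ whose $m$-th trajectory $\tau'^{\,k}_m$ has domain length $\ell_m$, where the sequence $(\ell_m)_m$ is a common refinement of the $n$ sequences of trajectory-lengths, obtained by merging the partial sums (the cumulative times at which a discrete action occurs) of all $n$ executions into one increasing sequence.

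The key steps, in order: (i) For each execution $\alpha^k$, let $t^k_j = \sum_{i\le j}\ltime{\tau^k_i}$ be the time at which the action $a^k_{j+1}$ occurs (using $\ltime{\cdot}$ for trajectory length); note $\alpha^k$ may be infinite and the $t^k_j$ need not be finite if some trajectory is open/infinite, so I would first handle the generic case where all trajectories are closed and finite, then remark on the degenerate cases. (ii) Form the merged increasing sequence $(s_m)_m$ of all the values $\{t^k_j : k, j\}$, so that between consecutive $s_m$ and $s_{m+1}$ no original execution has a discrete step. (iii) For each $k$ and each $m$, define $\tau'^{\,k}_m$ to be the restriction of the appropriate original trajectory $\tau^k_j$ to the sub-interval $[s_m, s_{m+1}]$, shifted to start at $0$; by axioms \textbf{T1} and \textbf{T2} (prefix and suffix closure) each such piece is again a trajectory of $\AutA$, and by \textbf{T3} their concatenation recovers $\tau^k_j$, so $\padex^k$ is indeed a padding of $\alpha^k$. (iv) Insert $a^k_{j+1}$ as the discrete action exactly at the index $m$ with $s_m = t^k_j$, and insert $\padact$ at every other index; since we only cut trajectories at interior points where $\lstate{} = \fstate{}$ of the two adjacent pieces, the padding condition ($a_i=\padact \Rightarrow \lstate{\tau_{i-1}}=\fstate{\tau_i}$) is satisfied. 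By construction all $\padex^k$ now have their $m$-th trajectory of common length $s_{m+1}-s_m$.

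The main obstacle is bookkeeping rather than mathematical depth: making the merge well-defined when executions are infinite and when several breakpoints coincide (so the merged sequence should be a sequence of \emph{distinct} values, or we allow zero-length trajectories — the HIOA/WA framework of \cite{segala} does permit point trajectories, which is convenient here), and checking that the result is still an $(A\cup\{\padact\},V)$-sequence in the formal sense (alternating trajectories and actions, with the state-matching conditions at each junction). One also has to be slightly careful if some $\alpha^k$ is a finite sequence ending in a trajectory while another is infinite: then the finite one is padded by appending, after its last action, infinitely many point trajectories interleaved with $\padact$ so that its length profile matches the others. None of these points is hard, but they are where the proof actually has content; the invocation of \textbf{T1}--\textbf{T3} is what guarantees that every piece and every concatenation we write down is a legal trajectory of the automaton.
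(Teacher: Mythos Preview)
The paper does not actually include a proof of this lemma; it is stated as a technical fact with the details deferred to \cite{gandalf12,tec_rep}. So there is no in-paper argument to compare against. That said, your construction --- merge the cumulative action-time sequences of all $n$ executions into one common increasing sequence $(s_m)$, cut every trajectory at these common times using \textbf{T1}/\textbf{T2}, and insert $\padact$ at every cut that was not an original action --- is the standard and correct approach, and it is the obvious thing the cited technical report would do. Your use of \textbf{T3} to argue that the reassembled pieces recover the original trajectories, and your check that $\lstate{}=\fstate{}$ at any mid-trajectory cut so the $\padact$-condition holds, are exactly the points that need verifying.

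One small correction to your last paragraph: the lemma as stated in the paper is tacitly used only for executions that already share a common time domain (in Theorems~\ref{th:sub_imp_in1} and~\ref{th:sub_imp_in2} the executions are a composed execution $\alpha$ and its component projections $\alpha_1,\alpha_B$, which by construction have the same $\domain{}$). Your proposed fix for the case where one execution is strictly shorter in time than another --- appending infinitely many point trajectories interleaved with $\padact$ --- does not work, because that changes the underlying execution rather than merely padding it (a padding must be obtainable from $\alpha$ by inserting $\padact$'s, not by extending $\alpha$ in time). The clean repair is to add the hypothesis that the $n$ executions have the same limit time, which is all the paper ever needs; your construction then goes through without this case.
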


%
%

\section{Case Study}

In this section we use the above introduced theory to describe a scenario based on the work in \cite{polycarpou}. 
We consider an environment, later called field. On this field there are $M$ targets, of which $M_k\subset M$ are known. Over the field $n$ UAVs (Unmanned Aerial Vehicles) fly, which are supposed to detect all the targets and engage them. We are not interested in the cooperative search policy, which is described in  \cite{polycarpou}. We want to prove, using this example, that our formal model is reliable in describing the characteristics of such scenario, preserving some results of composability. 
The WAs reported here are not the only ones that can be used to represent the same scenario of course, we used them because they are the most straightforward to design.
To represent WAs we use a variant of the TIOA language (see \cite{TIOA}), with some extensions for hybrid systems (\cite{helicopter}) and some new tools for world variables. Note that world variables are always described using their trajectories in time and space, i.e. they are described for any instant of time $t$ and any point in space $p$. We assume an underlying metric space $\Reals^2$, where points are indicated by $p\in\Reals^2$. Obviously $p$ is given by a couple of coordinates $(x,y)$. We do not use $\Reals^3$ because we are interested in the position of the targets on the field and in the projection of the position of the UAVs on the field, not in their altitude. We also assume that the area occupied by the field is the entire space $\Reals^2$.

Since the description of the field characteristics is very dependent from the nature of the targets inside it, we start by describing targets represented in fig. \ref{target}. 
Usually what determines the interaction with the ground is the class of the target. For example if we are in an environment surveillance scenario, some classes might be: fire, flood, steam, etc. A very general parameter might be the color: if a UAV is flying over a certain field, which we suppose to be green, and we see a black spot, then we conclude that something wrong is happening, hence a black target might be there. 


The local internal variables of a Target are: position $p_T$, orientation $\phi$, a Fail signal used by the target to {\em delete} itself when engaged. The world input is compression $k$, given by the ground when the target is engaged. For example, if the target is a fire and the UAV is pouring some water, the field will react by getting colder and change the pressure in the meanwhile, until the fire is put off. We will generally call compression the reaction of the ground to the engage signal. The target world output is its color $\xi$. Each target has an ID given by its parameter IDT and a class given by its color $T_C$, which is still a parameter. 
A function $f$ is defined for targets, giving the surface of the field occupied by the target area. We do not specify this function because it is irrelevant to the scope of this paper, but it is calculated starting from the target position $p_T$ and its orientation angle $\phi$. 
The target has an internal action {\em delete} occurring when the field compression reaches a certain value $k_{max}$. Its effect is to put the Fail signal on and then change the color of the target location to the color of the field, which we assume to be green. 
Note that the Target has two states: in one it is alive and its Fail signal is false, in the other one it is deleted and its Fail signal is true. The switching between the states occurs when the action {\em delete} arises.


The Field WA is represented in fig. \ref{field}. 
The world internal variables of Field are color $C$ and compression $K$. The world input variables are: color of the target $\xi$ and the engage signal $e$ from UAVs. The world output variables refer compression $k$ and color $c$. The Field has only one state in which its color is a replica of the color variable given by the targets, its compression increases when the engage signal is activated and its outputs are used to communicate with targets (compression) and UAVs (color).

UAVs have a more complex structure represented in fig. \ref{uav}. Each UAV is specified by an ID (IDU) and a color ($T_U$) associated with the kind of target it can engage. Note that not all UAVs can engage all targets, but only the ones with the same color. Again we are not going into the detail of levels greater than 0. 
UAVs local internal variables are: position $p_U$, speed $v$, heading angle $\psi$, performing task $\ms{tk}$, actual assignment assign, position given by the sensor footprint $s_{fp}$. The UAV can perform one of the following tasks: search ($S$), confirm ($C$), engage ($E$). The search task is the default one: the UAV moves following a constant trajectory and always searching for targets. When it sees with a certain probability a target (color different from green) it updates a map of targets, and its task goes to confirm. If the target is of the same color of the one given by $T_U$, then the UAV can compete for engage it. If it wins the competition its task moves to engage. Similarly for the assignments: a free assignment means that the UAV has not been associated to any target, i.e. it has not been elected to engage a target, but it is not even competing to engage a target; a competing assignment means that a UAV is competing with other UAVs to engage a certain target; a committed assignment means that the UAV has been elected to engage the target. The sensor footprint gives the position in which the UAV is looking for a target.
The local input variables are: probability $Pin_t$ of presence of a target broadcasted by the UAVs, probability $Pin_\phi$ of orientation of a target broadcasted by the UAVs, cost$_{in}$ array of costs for reaching a target given by the other competing UAVs.
The local output variables are: probability $P_t$ of presence of a target, probability $P_\phi$ of orientation of a target, cost of reaching the target for which the UAV is competing; map of target presence probability for each position $m_p$, map of task for each position $m_t$, map of target kind for each position $m_k$, map of target orientation estimate for each position $m_\phi$, map of target assignment $m_a$. All the maps are broadcasted and updates using a function {\em update} which puts a value on a specific position of the map. We are not going into the detail neither of the map model, nor of the update function.
The UAV world input is the color of the field $c$, its world output is the engage signal $e$.

\begin{figure}[htbp]
\begin{scriptsize}

{\bf type} Rad = $\Reals| 2\pi$


{\bf type} Color = \{green, $\chi_1$, $\chi_2$, $\chi_3$\}
\\

{\bf worldautomaton} Target (IDT: Real,$T_C$: Color)\\

{\bf world variables LEVEL 0}

\qquad	{\bf input}  $k$: Real 

\qquad	{\bf output} $\xi$: Color

{\bf local variables LEVEL 0}

\qquad	{\bf internal} $\phi$: Rad, $p_T$: Real$^2$, 

\qquad\qquad\qquad   Fail: Bool $:=$ false

{\bf actions}

\qquad      {\bf internal} {\em delete}

{\bf transitions}

\qquad {\bf internal} {\em delete}

\qquad {\bf pre} $\exists p\in f(\phi,p_T)$ s.t. $k(t,p) \geq k_{max}$

\qquad {\bf eff}  Fail $=$ true

{\bf trajectories}

\qquad   $\xi(t,p)=\left\{\begin{array}{ll}
T_C & \mbox{if } p\in f(\phi,p_T) \land \neg\mbox{Fail}\\
\mbox{green} & \mbox{otherwise}
\end{array}\right.
$ 



\caption{Target world automaton.}
\label{target}
\end{scriptsize}
\end{figure}

\begin{figure}[htbp]
\begin{scriptsize}

{\bf worldautomaton} Field \\

{\bf world variables LEVEL 1}

\qquad	{\bf internal} $C$: Color, $K$: Real $:= 0$ 

\qquad	{\bf input} $\xi$: Real, $e$: Bool

\qquad	{\bf output} $k$:Real, $c$: Color 





{\bf trajectories}

\qquad  $C(t,p)=\xi(t,p)$;


\qquad   $\dot{K}(t,p)=e(t,p)?1:0$;


\qquad   $c(t,p)=C(t,p)$; 

\qquad   $k(t,p)=K(t,p)$.


\caption{Field world automaton.}
\label{field}
\end{scriptsize}
\end{figure}

\begin{figure}[htbp]
\begin{scriptsize}



{\bf type} Task = $\{S, C, E\}$ 

{\bf type} Assignment = $\{$free, competing, committed$\}$ 
\\

{\bf worldautomaton} UAV (IDU: Real, $T_U$: Color)\\

{\bf world variables LEVEL 0}

\qquad	{\bf input} $c$: Color, 

\qquad	{\bf output} $e$: Bool $:=$ false

{\bf local variables LEVEL 0}

\qquad	{\bf internal} $p_U$: Real$^2$, $v$: Real$^2$, $\psi$: Rad, 


\qquad\qquad\qquad      $\ms{tk}$: Task $:= S$,  

\qquad\qquad\qquad      assign: Assignment $:=$ free, 

\qquad\qquad\qquad      $s_{fp}$: Real$^2$ 

\qquad  {\bf input}  $Pin_t$: $\{0,1\}$, $Pin_\phi$: $\{0,1\}$, cost$_{in}$: array of Real

\qquad  {\bf output}  $P_t$: $\{0,1\}$, $P_\phi$: $\{0,1\}$, 
cost: Real, 


\qquad\qquad\qquad      $m_p$: Map, 
$m_t$: Map, 
$m_k$: Map, 

\qquad\qquad\qquad      $m_\phi$: Map, 
$m_a$: Map 

{\bf actions}

\qquad      {\bf internal} {\em search}, {\em confirm}, {\em engage}, {\em compete}, {\em commit}

{\bf transitions}

\qquad   {\bf internal} {\em search}

\qquad   {\bf pre} $P_t(s_{fp},t)\leq p_s \lor \exists i \mbox{ s.t. cost}(t)\geq \mbox{cost}_{in}^i(t)$

\qquad\qquad  $\lor\ c(t,s_{fp})=$ green   

\qquad   {\bf eff}   $\ms{tk}=S$; assign $=$ free;


\qquad   {\bf internal} {\em confirm}

\qquad   {\bf pre} $P_t(s_{fp},t)> p_s$   
$\land\ c(t,s_{fp})\neq$ green 

\qquad   {\bf eff}    $\ms{tk}=C$; 


\qquad   {\bf internal} {\em compete}

\qquad  {\bf pre} $c(t,s_{fp})=T_U$

\qquad  {\bf eff}  assign $=$ competing;  

\qquad   {\bf internal} {\em commit}

\qquad  {\bf pre}  cost$(t)=\min_i$ cost$_{in}^i(t)$ 

\qquad  {\bf eff}  assign $=$ committed; 

\qquad   {\bf internal} {\em engage}

\qquad   {\bf pre} $P_t(s_{fp},t)> p_e$   

\qquad   {\bf eff}   $\ms{tk}=E$;  

{\bf trajectories}

\qquad  $\dot{p_U}(t)=v(t)$;

\qquad  $\dot{\psi}\leq\eta$; 

\qquad  $\dot{v}=0$;

\qquad  $e(t,p)= (\ms{tk}=E)?$ true: false;

\qquad  $P_\phi(s_{fp},t)=g(P_\phi(s_{fp},t^-),Pin_\phi(s_{fp},t),c(t,s_{fp}),\psi(t))$;

\qquad  $m_\phi(s_{fp},t)=$ {\em update}$(P_\phi(s_{fp},t))$; 

\qquad  $P_t(s_{fp},t)=h(P_t(s_{fp},t^-),Pin_t(s_{fp},t),c(t,s_{fp}))$;

\qquad  $m_p(s_{fp},t)=$ {\em update}$(P_t(s_{fp},t))$; 

\qquad  cost$(t) = r(|P_U-s_{fp}|)$;

\qquad $m_k(s_{fp},t)=$ {\em update}($c(t,s_{fp})$);

\qquad $m_t(s_{fp},t)=$ {\em update}(task); 

\qquad  $m_a(s_{fp},t)= ${\em update}(assign);


\caption{UAV world automaton.}
\label{uav}
\end{scriptsize}
\end{figure}


We now want to put a Target inside Field. We verify that the inplace compatibility is valid. First of all we lift the level of Target (and hence of all its variables and actions) to 1. We did not define variables at levels greater than 1, hence the first condition for compatibility is valid. The second condition is on world outputs: $Y_F=\{k,c\}$, $Y_T=\{\xi\}$ implies that $Y_T\cap Y_F=\emptyset$. Since Field has no actions, the third condition is also verified. Field and Target have no output variables and no output actions, so the last two conditions for compatibility are also verified.
The inplacement of Target in Field is the object described in fig. \ref{fieldtarget}. The reader can notice that condition 9 of Def. \ref{def:inplace} is respected, since world variable $\xi\in U_{w1}\cap Y_{w3}$.


\begin{figure}[htbp]
\begin{scriptsize}

{\bf worldautomaton} Field[Target(IDT,$T_C$)]\\

{\bf world variables LEVEL 1}


\qquad	{\bf internal} $C$, $K:=0$

\qquad	{\bf input} $\xi$, $e$

\qquad	{\bf output} $k$, $c$

{\bf local variables LEVEL 1}


\qquad     {\bf internal}        $\phi$, $p_T$, Fail $:=$ false

{\bf actions}

\qquad      {\bf internal} {\em delete}

{\bf transitions}

\qquad {\bf internal} {\em delete}

\qquad {\bf pre} $\exists p\in f(\phi, p_T)$ s.t. $k(t,p) \geq k_{max}$

\qquad {\bf eff}  Fail $=$ true

{\bf trajectories}


\qquad  $C(t,p)=\neg\mbox{Fail}?\xi(t,p) : \mbox{green}$

\qquad   $\dot{K}(t,p)=e(t,p)?1:0$;


\qquad   $c(t,p)=C(t,p)$; 

\qquad   $k(t,p)=K(t,p)$.



\caption{Field[Target]}
\label{fieldtarget}
\end{scriptsize}
\end{figure}

\begin{figure}[htbp]
\begin{scriptsize}

{\bf worldautomaton} FalseField\\

{\bf world variables LEVEL 1}

\qquad	{\bf internal}  $\ms{fC}$: Color, $\ms{fK}$: Real $:=0$

\qquad	{\bf input} $e$: Bool

\qquad	{\bf output} $c$: Color, $k$: Real

{\bf local variables LEVEL 1}


\qquad {\bf internal} $\theta$: Rad, $\ms{fp}$: Real, Del: Bool $:=$ false, $\chi$: Color

{\bf actions}

\qquad      {\bf internal} {\em change}

{\bf transitions}

\qquad {\bf internal} {\em change}

\qquad {\bf pre} $\exists p\in f(\theta, \ms{fp})$ s.t. $k(t,p) \geq k_{max}$

\qquad {\bf eff}  Del $=$ true

{\bf trajectories}

\qquad   $\ms{fC}(t,p)=(p\in f(\theta, \ms{fp}) \land \neg\mbox{Del})? \chi : \mbox{green}$;

\qquad   $\dot{\ms{fK}}(t,p)=e(t,p)?1:0$;

\qquad   $c(t,p)=\ms{fC}(t,p)$; 

\qquad   $k(t,p)=\ms{fK}(t,p)$.




\caption{FalseField}
\label{falsefield}
\end{scriptsize}
\end{figure}

We now define system FalseField as in fig. \ref{falsefield}. 
We want to prove that FalseField is equivalent to Field[Target(IDT,$T_C$)], i.e. a bisimulation exists between the two automata. Consider a state of Field[Target(IDT,$T_C$)] in which Fail=false ($k(t,p)<k_{max}, \forall p\in \Reals^2$), hence $C(t,p)=\xi(t,p)$, i.e. $C(t,p)=T_C, p\in f(\phi,p_T)$ and green everywhere else. Also in this state $e(t,p)=$ false, $\forall p\in \Reals^2$, hence $\dot{K}(t,p)=0$. The bisimilarity relation is the identity for all internal variables (both world and local). The corresponding state of FalseField is given by Del=false, hence $fC(t,p)=\chi, \forall p\in f(\theta,\ms{fp})$, where $\theta$ has the same value of $\phi$, $\ms{fp}$ has the same value of $p_T$ and $\chi$ has the same value of $T_C$. Since $e(t,p)=$ false, $\forall p\in \Reals^2$, then $\dot{\ms{fK}}(t,p)=0$. Hence if the color of Target is $\chi_2$, i.e. $T_C=\chi=\chi_2$, the output variables of both WAs will be $c=\chi_2$ and $k=$ CONST. 
Now if $e(t,p)=1$ for some $p$, then $\dot{K}(t,p)=1$, i.e. $K$ starts increasing, and so does $k$. After a certain time $t$ it reaches the value of $k_{max}$ and the action {\em delete} occurs. Hence in the new state Fail=true, i.e. $C(t,p)=$ green for $p\in f(\phi,p_T)$. Similarly, for FalseField, $e(t,p)=1$ implies that $\dot{\ms{fK}}(t,p)=1$, i.e. $\ms{fK}$ starts increasing, and so does $k$. When $k$ reaches $k_{max}$, action {\em change} occurs, and Del=true. In this new status $\ms{fC}(t,p)=$ green for $p\in f(\theta,\ms{fp})$. In both cases $c=$ green.

What we want to prove now is that (Field[Target])[UAV] is equivalent to FalseField[UAV]. For lack of space these two automata are not represented here, but the reader can easily follow the inplacement procedure to design them. 
We start from the following state in Field[Target][UAV] at time $t^*$:
 Fail=false,
$c(t*,p*)=T_U=\chi_2$ for a certain position $p*\in\Reals^2$,
$s_{fp}=p*$,
$\ms{tk}=C$,
assign=committed.
The corresponding state of FalseField[UAV] is:
Del=false,
$c(t*,p*)=T_U=\chi_2$,
$s_{fp}=p*$,
$\ms{tk}=C$,
assign=committed.
Hence equivalence by identity is preserved. 
At next instant of time, $t*^+$, $P_t(p*,t*^+)> p_e$, and action {\em engage} occurs with the effect of letting $\ms{tk}=E$. As a consequence $e(t*^+,p*)=$ true. In (Field[Target])[UAV], this implies that $\dot{K}(t*^+,p*)=1$. Output variable $k$ will start increasing, following internal variable $K$ until it reaches value $k_{max}$ and activating action {\em delete} with the consequence of variable Fail becoming true and world output $c$ becoming green. Analogously in FalseField[UAV] $\dot{\ms{fK}}(t*^+,p*)=1$ and output $k$ starts increasing until it reaches value $k_{max}$ such activating action {\em change} with the consequence of variable Del becoming true and world output $c$ becoming green.
The above execution represents the case in which a UAV $i$ is associated with a certain target $j$ having the same color, i.e. $T_C(j)=T_U(i)$. The UAV sees where the color is and activates the engage signal to delete target $j$, that after a certain time fails and sets the color in its location to green.

%
%
%
%
%
%
%
%
%
%
%
%
%
%
%
%
%
%
%
%
%
%
%
%
%
%
%
%
%
%
%
%
%
%
%
%
%
%
%
%
%
%
%
%
%
%
%
%

\section{Concluding Remarks}\label{sec:conclusion}

In this paper we proposed an extension of the model introduced in \cite{gandalf12} to provide an explicit and natural representation of the fact that objects move in a world that they can observe and modify. 
Besides the classical signals that automata send to each other via discrete communication events or shared continuous variables, the automata we introduced, called World Automata (WAs), can communicate implicitly by affecting their surrounding world and observing the effects on the world of the activity of other automata. This mechanism for interaction turns out to be adequate for compositional analysis, which is one of the main features of HIOAs that we wanted to keep in the extended model. Moreover we represented hierarchical nested worlds, by introducing a notion of levels of variables. Indeed we extended the notion of parallel composition of \cite{gandalf12} to WAs, keeping compositionality results. We introduced another operator, called inplacement, to represent the hierarchical composition of WAs.
We presented in this paper an example to show the effectiveness of the theory, but another reality-based application can be found in \cite{adhs12}. The simulation tools are under study.
Future research directions include the ability to describe scenarios where automata are created and destroyed and where communication links change dynamically. One approach for dynamic communication that we find promising is to associate each state and output action with an open set of the underlying space $\M$ and interact only with automata that lie in such open set, which we can call neighborhood. 
The natural extension to this problem is the ability of WAs to achieve selective communication, by choosing if they want to broadcast information to every other agent in the neighborhoods or only to some of them.

%
\bibliographystyle{eptcs}
\bibliography{wa}  

\end{document}